  \theoremstyle{plain}
  \newtheorem{prop}{\protect\propositionname}
  \theoremstyle{plain}
\journalname{Statistics and Computing}
\begin{document}

\title{Bayesian model comparison with un-normalised likelihoods%\thanks{Grants or other notes
%about the article that should go on the front page should be
%placed here. General acknowledgments should be placed at the end of the article.}
}
%\subtitle{Do you have a subtitle?\\ If so, write it here}

%\titlerunning{Short form of title}        % if too long for running head

\author{Richard G. Everitt        \and
        Adam M. Johansen \and
        Ellen Rowing \and
        Melina Evdemon-Hogan
}

%\authorrunning{Short form of author list} % if too long for running head

\institute{R. G. Everitt \and E. Rowing \and M. Evdemon-Hogan \at
              Department of Mathematics and Statistics, University of Reading, UK. \\
              \email{r.g.everitt@reading.ac.uk}           %  \\
           \and
           A. M. Johansen \at
              Department of Statistics, University of Warwick, Coventry, CV4 7AL, UK. \\
              \email{a.m.johansen@warwick.ac.uk}           %  \\
}
%            \and
%            E. Rowing \at
%               Department of Mathematics and Statistics, University of Reading, UK.
%            \and
%             M. Evdemon-Hogan \at
%               Department of Mathematics and Statistics, University of Reading, UK.
% }

\date{Received: date / Accepted: date}
% The correct dates will be entered by the editor

\maketitle

\begin{abstract}
Models for which the likelihood function can be evaluated only up to a
parameter-dependent unknown normalizing constant, such as Markov random field
models, are used widely in computer science, statistical physics,  spatial
statistics, and network analysis. However, Bayesian analysis of these models
using standard Monte Carlo methods is not possible due to the intractability
of their likelihood functions. Several methods that permit exact, or close to
exact, simulation from the posterior distribution have recently been
developed. However, estimating the evidence and Bayes' factors (BFs) for these
models remains challenging in general. This paper describes new random weight
importance sampling and sequential Monte Carlo methods for estimating BFs that
use simulation to circumvent the evaluation of the intractable likelihood, and
compares them to existing methods. In some cases we observe an advantage in
the use of \textit{biased} weight estimates. An initial investigation into the
theoretical and empirical properties of this class of methods is
presented. Some support for the use of biased estimates is presented, but we
advocate caution in the use of such estimates.

\keywords{approximate Bayesian computation \and Bayes' factors \and importance
  sampling \and marginal likelihood \and Markov random field \and partition function \and sequential Monte Carlo}
% \PACS{PACS code1 \and PACS code2 \and more}
% \subclass{MSC code1 \and MSC code2 \and more}
\end{abstract}

\section{Introduction\label{sec:Introduction}}

There has been much recent interest in performing \linebreak Bayesian inference
in models where the posterior is intractable and, in particular, we have
the situation in which the posterior distribution $\pi(\theta|y)\propto p(\theta)f(y|\theta)$,
cannot be evaluated pointwise. This intractability typically occurs
occurs due to the intractability of the likelihood, i.e. $f(y|\theta)$
cannot be evaluated pointwise. Example scenarios include:
\begin{enumerate}
\item the use of big data sets, where $f(y|\theta)$ consists of a product
of a large number of terms; 
\item the existence of a large number of latent variables $x$, with
$f(y|\theta)$ known only as a high dimensional integral $f(y|\theta)=\int_{x}f(y,x|\theta)dx$; 
\item when $f(y|\theta)=\frac{1}{Z(\theta)}\gamma(y|\theta)$, with $Z(\theta)$
being an intractable normalising constant (INC) for the tractable
term $\gamma(y|\theta)$ (e.g. when $f$ factorises as a Markov random
field); 
\item where it is possible to sample from $f(\cdot|\theta)$, but not to
evaluate it, such as when the distribution of the data given $\theta$
is modelled by a complex stochastic computer model. 
\end{enumerate}
Each of these (overlapping) situations has been considered in some
detail in previous work and each has inspired different methodologies.
%The feature common to every approach is that some approximation is
%introduced into either the likelihood under consideration, or the
%Monte Carlo simulation algorithm used to simulate from the posterior.

In this paper we focus on the third case, in which the likelihood
has an INC. This is an important problem in its own right (\citet{Girolami2013}
refer to it as ``one of the main challenges to methodology for computational
statistics currently''). There exist several competing methodologies
for inference in this setting (see \citet{Everitt2012}). In particular,
the \emph{``exact'' approaches} of \citet{Moller2006} and \citet{Murray2006}
exploit the decomposition $f(y|\theta)=\frac{1}{Z(\theta)}\gamma(y|\theta)$,
whereas \emph{``simulation based'' methods} such as approximate
Bayesian computation (ABC) \citep{Grelaud2009} do not depend upon
such a decomposition and can be applied more generally: to situation
1 in \citet{Picchini2013}; situations 2 and 3 (e.g. \citet{Everitt2012})
and situation 4 (e.g. \citet{Wilkinson2008}).

This paper considers the problem of Bayesian model comparison in the
presence of an INC. We explore both exact and simulation-based methods,
and find that elements of both approaches may also be more generally
applicable. Specifically: 
\begin{compactitem}
\item For exact methods we find that approximations are required to allow
practical implementation, and this leads us to investigate the use
of approximate weights in importance sampling (IS) and sequential
Monte Carlo (SMC). We examine the use of both \emph{exact-approximate}
approaches (as in \citet{Fearnhead2010f}) and also ``\emph{inexact-approximate}'' methods, in which complete
flexibility is allowed in the approximation of weights, at the cost
of losing the exactness of the method. This work is a natural counterpart
to \citet{Alquier2014}, which examines ananalogous question
(concerning the acceptance probability) for Markov chain Monte Carlo
(MCMC) algorithms. These generally applicable methods, ``noisy MCMC''
\citep{Alquier2014} and ``noisy SMC'' (this paper) have some potential
to address situations 1-3. 
\item We provide some comparison of these inexact approximations with
  simulation-based methods, including the ``synthetic likelihood'' (SL) of
  \citet{Wood2010f}. In the applications considered here we find this to be a
  viable alternative to ABC. Our results are suggestive that this, and related
  methods, may find success in scenarios in which ABC is more usually applied. 
\end{compactitem}
In the remainder of this section we briefly outline the problem of, and
methods for, parameter inference in the presence of an INC.
We then detail the problem of Bayesian model comparison in this context,
before discussing methods for addressing it in the following two sections.

\subsection{Parameter inference\label{sub:Parameter-inference}}

In this section we consider the problem of simulating from $\pi(\theta|y)\propto p(\theta)\gamma(y|\theta)/Z(\theta)$
using MCMC. This problem has been well studied, and such models are
termed\emph{ doubly intractable} because the acceptance probability
in the Metropolis-Hastings (MH) algorithm 
\begin{equation}
\min\left\{ 1,\frac{q(\theta|\theta^{*})}{q(\theta^{*}|\theta)}\frac{p(\theta^{*})}{p(\theta)}\frac{\gamma(y|\theta^{*})}{\gamma(y|\theta)}\frac{Z(\theta)}{Z(\theta^{*})}\right\} ,\label{eq:intractable_accept}
\end{equation}
cannot be evaluated due to the presence of the INC. We first review
exact methods for simulating from such a target in sections \ref{sub:Single-and-multiple}-\ref{sub:Russian-Roulette}, before
looking at simulation-based methods in sections \ref{sub:Approximate-Bayesian-computation}
and \ref{sub:Synthetic-likelihood}. The methods described here in
the context of MCMC form the basis of the methods for evidence estimation
developed in the remainder of the paper.

\subsubsection{Single and multiple auxiliary variable methods\label{sub:Single-and-multiple}}

\selectlanguage{english}%
\citet{Moller2006} avoid the evaluation of the INC by augmenting
the target distribution with an extra variable $u$ that lies on the
same space as $y$, and use an MH algorithm with target distribution
\begin{equation*}
\pi(\theta,u|y)\propto q_{u}(u|\theta,y)f(y|\theta)p(\theta),
\end{equation*}
where $q_{u}$ is some (normalised) arbitrary distribution. As the
MH proposal in $(\theta,u)$-space they use 
\begin{equation*}
(\theta^{*},u^{*})\sim f(u^{*}|\theta^{*})q(\theta^{*}|\theta),
\end{equation*}
giving an acceptance probability of 
\begin{equation*}
\min\left\{ 1,\frac{q(\theta|\theta^{*})}{q(\theta^{*}|\theta)}\frac{p(\theta^{*})}{p(\theta)}\frac{\gamma(y|\theta^{*})}{\gamma(y|\theta)}\frac{q_{u}(u^{*}|\theta^{*},y)}{\gamma(u^{*}|\theta^{*})}\frac{\gamma(u|\theta)}{q_{u}(u|\theta,y)}\right\} .
\end{equation*}

Note that, by viewing $q_{u}(u^{*}|\theta^{*},y)/\gamma(u^{*}|\theta^{*})$
as an unbiased IS estimator of $1/Z(\theta^{*})$, this algorithm
can be seen as an instance of the \emph{exact approximations} described
in \citet{Beaumont2003} and \citet{Andrieu2009}, where it is established
that if an unbiased estimator of a target density is used appropriately
in an MH algorithm, the $\theta$-marginal of the invariant distribution
of this chain is the target distribution of interest. This automatically
suggests extensions to the \emph{single auxiliary variable (SAV)}
method described above, where $M>1$ importance points are used, yielding:
\begin{equation}
\widehat{\frac{1}{Z(\theta)}}=\frac{1}{M}\sum_{m=1}^{M}\frac{q_{u}(u^{(m)}|\theta,y)}{\gamma(u^{(m)}|\theta)}.\label{eq:1overz_sav_multiplepoints}
\end{equation}
\citet{Andrieu2012d} show that the reduced variance of this estimator
leads to a reduced asymptotic variance of estimators from the resultant
Markov chain. The variance of the IS estimator is strongly dependent
on an appropriate choice of IS target $q_{u}(\cdot|\theta,y)$, which
should have lighter tails than $f(\cdot|\theta)$. \citet{Moller2006}
suggest that a reasonable choice may be $q_{u}(\cdot|\theta,y)=f(\cdot|\widehat{\theta})$,
where $\widehat{\theta}$ is the maximum likelihood estimator of $\theta$.
However, in practice $q_{u}(\cdot|\theta,y)$ can be difficult to
choose well, particularly when $y$ lies on a high dimensional space.
Motivated by this, annealed importance sampling (AIS) \citep{Neal2001}
can be used as an alternative to IS, leading to the \emph{multiple auxiliary
  variable (MAV)} method of \citet{Murray2006}. AIS makes use of a sequence of
$K$ targets, which in \citet{Murray2006} are chosen to be 
\begin{eqnarray}
f_{k}(\cdot|\theta,\widehat{\theta},y) & \propto & \gamma_{k}(\cdot|\theta,\widehat{\theta},y) \nonumber\\ & = & \gamma(\cdot|\theta){}^{(K+1-k)/(K+1)}q_{u}(\cdot|\theta,y){}^{k/(K+1)} \label{eq:ais_seq}
\end{eqnarray}
between $f(\cdot|\theta)$ and $q_{u}(\cdot|\theta,y)$. After the
initial draw $u_{K+1}\sim f(\cdot|\theta)$, the auxiliary point is
taken through a sequence of $K$ MCMC moves which successively have
target \linebreak $f_{k}(\cdot|\theta,\widehat{\theta},y)$ for $k=K:1$. The
resultant IS estimator is given by 
\begin{equation}
\widehat{\frac{1}{Z(\theta)}}=\frac{1}{M}\sum_{m=1}^{M}\prod_{k=1}^{K}\frac{\gamma_{k}(u_{k-1}^{(m)}|\theta,\widehat{\theta},y)}{\gamma_{k-1}(u_{k-1}^{(m)}|\theta,\widehat{\theta},y)}.\label{eq:1overz_mav_multiplepoints}
\end{equation}
This estimator has a lower variance (although at a higher computational
cost) than the corresponding IS estimator. We note that AIS can be viewed as a
particular case of SMC without resampling and one might expect to obtain
additional improvements at negligible cost by incorporating resampling steps
within such algorithms (see \citet{Zhou2013} for an illustration of the potential
improvement and some discussion); we do not pursue this here as it is not the
focus of this work.

\subsubsection{Exchange algorithms\label{sub:Exchange-algorithms}}

An alternative approach to avoiding the ratio of INCs in equation
\eqref{eq:intractable_accept} is given by \citet{Murray2006},
in which it is suggested to use the acceptance probability
\[
\min\left\{ 1,\frac{q(\theta|\theta^{*})}{q(\theta^{*}|\theta)}\frac{p(\theta^{*})}{p(\theta)}\frac{\gamma(y|\theta^{*})}{\gamma(y|\theta)}\frac{\gamma(u|\theta)}{\gamma(u|\theta^{*})}\right\} ,
\]
where $u\sim f(\cdot|\theta^{*})$, motivated by the intuitive idea
that $\gamma(u|\theta)/\gamma(u|\theta^{*})$ is a single point IS
estimator of \linebreak $Z(\theta)/Z(\theta^{*})$. This method is shown to have
the correct invariant distribution, as is the extension in which AIS
is used in place of IS. A potential extension might seem to be using
multiple importance points $\{u^{(m)}\}_{m=1}^{M}\sim f(\cdot|\theta^{*})$
to obtain an estimator of $Z(\theta)/Z(\theta^{*})$ that has a smaller
variance, with the aim of improving the statistical efficiency of
estimators based on the resultant Markov chain. This scheme is shown
to work well empirically in \citet{Alquier2014}. However, this chain
does not have the desired target as its invariant distribution. Instead
it can be seen as part of a wider class of algorithms that use a noisy
estimate of the acceptance probability: \emph{noisy Monte Carlo} algorithms
(also referred to as \emph{``inexact approximations''} in \citet{Girolami2013}).
\citet{Alquier2014} shows that under uniform ergodicity of the ideal
chain, a bound on the expected difference between the noisy and true
acceptance probabilities can lead to bounds on the distance between
the desired target distribution and the iterated noisy kernel. It
also describes additional noisy MCMC algorithms for approximately
simulating from the posterior, based on Langevin dynamics.

\subsubsection{Russian Roulette and other approaches\label{sub:Russian-Roulette}}
\citet{Girolami2013} \foreignlanguage{british}{use series-based approximations
to intractable target distributions within the \linebreak exact-approximation
framework, where ``Russian Roulette'' methods from the physics literature
are used to ensure the unbiasedness of truncations of infinite sums.
These methods do not require exact simulation from $f(\cdot|\theta^{*})$,
as do the SAV and exchange approaches described in the previous
two sections. However, SAV and exchange are often implemented in practice
by generating the auxiliary variables by taking the final point of
a long ``internal'' MCMC run in place of exact simulation (e.g \citet{Caimo2011}).
For finite runs of the internal MCMC, this approach will not have
exactly the desired invariant distribution, but \citet{Everitt2012}
shows that under regularity conditions the bias introduced by this
approximation tends to zero as the run length of the internal MCMC
increases: the same proof holds for the use of an MCMC chain for the
simulation within an ABC-MCMC (i.e. MCMC applied to an ABC approximation of
the posterior, \citet{Marjoram2003a}) or SL-MCMC (i.e. MCMC applied to an SL approximation) algorithm, as described in
sections \ref{sub:Approximate-Bayesian-computation} and \ref{sub:Synthetic-likelihood}.
Although the approach of \citet{Girolami2013} is exact, as they note
it is significantly more computationally expensive than this
approximate approach. For this reason, we do not pursue Russian Roulette
approaches further in this paper.}

When a rejection sampler is available for simulating from $f(\cdot|\theta^{*})$,
\citet{Rao2013} introduce an alternative exact algorithm that has
some favourable properties compared to the exchange algorithm. Since
a rejection sampler is not available in many cases, we do not pursue
this approach further.

\subsubsection{Approximate Bayesian computation\label{sub:Approximate-Bayesian-computation}}
Approximate Bayesian Computation \citep{Tavare1997f} refers to methods that aim to approximate
an intractable likelihood $f(y|\theta)$ through the integral 
%\begin{equation}
%\widetilde{f}(S(y)|\theta)=\int\pi_{\epsilon}\left(S(u)|S(y)\right)f\left(u|\theta\right)\left\vert \%frac{\partial S_{i}}{\partial u_{j}}(u)\right\vert du,\label{eq:abc_integral}
%\end{equation}
\begin{equation}
  \widetilde{f}(S(y)|\theta) \propto \int \pi_\epsilon(S(y)|S(u)) f(u|\theta) du \label{eq:abc_integral}
\end{equation}
where $S(\cdot)$ gives a vector of summary statistics
%, \linebreak $\left\vert \partial S_{i}/\partial u_{j}(u)\right\vert $
%denotes the Jacobian determinant arising from the change of variable,
and $\pi_{\epsilon}\left(S(y)|S(u)\right)$ is the density of a symmetric kernel with
bandwidth $\epsilon$, centered at $S(u)$ and evaluated at $S(y)$. As $\epsilon\rightarrow0$,
this distribution becomes more concentrated, so that
in the case where $S(\cdot)$ gives sufficient statistics for estimating
$\theta$, as $\epsilon\rightarrow0$ the approximate posterior becomes
closer to the true posterior. This approximation is used within standard
Monte Carlo methods for simulating from the posterior. For example,
it may be used within an MCMC algorithm,
where using an exact-approximation argument it can be seen that it
is sufficient in the calculation of the acceptance probability to
use the Monte Carlo approximation\foreignlanguage{english}{ 
\begin{equation}
\widehat{f}_{\epsilon}(S(y)|\theta^{*})=\frac{1}{M}\sum_{m=1}^{M}\pi_{\epsilon}\left(S(y) \left\vert S\left(u^{(m)}\right)\right.\right)\label{eq:abc_llhd}
\end{equation}
}for the likelihood at $\theta^{*}$ at each iteration, where \linebreak \foreignlanguage{english}{$\{u^{(m)}\}_{m=1}^{M}\sim f(\cdot|\theta^{*})$}.
Whilst the exact-approximation argument means that there is no additional
bias due to this Monte Carlo approximation, the approximation introduced
through using a tolerance $\epsilon>0$ or insufficient summary statistics
may be large. For this reason it might be considered a last resort
to use ABC on likelihoods with an INC, but previous success on these
models (e.g \citet{Grelaud2009} and \citet{Everitt2012}) lead us
to consider them further in this paper.

\subsubsection{Synthetic likelihood\label{sub:Synthetic-likelihood}}
ABC is essentially using, based on simulations from $f$, a nonparameteric
estimator of $f_{S}\left(S|\theta\right)$, the distribution of the
summary statistics of the data given $\theta$. In some situations,
a parametric model might be more appropriate. For example, if the statistic is
the sum of independent random variables, a Central Limit Theorem (CLT) might
imply that it would be appropriate to assume that $f_{S}\left(S|\theta\right)$ is a
multivariate Gaussian.  

The SL approach \citep{Wood2010f} proceeds by making exactly this
Gaussian assumption and uses this approximate likelihood within an
MCMC algorithm. The parameters (the mean and variance) of
this approximating distribution for a given $\theta$ are estimated
based on the summary statistics of simulations $\{u^{(m)}\}_{m=1}^{M}\sim f(\cdot|\theta)$.
Concretely, the estimate of the likelihood is
\begin{equation*}
\widehat{f}_{\mbox{SL}}\left(S(y)|\theta\right)=\mathcal{N}\left(S(y);\widehat{\mu}_{\theta},\widehat{\Sigma}_{\theta}\right),\label{eq:sl_llhd}
\end{equation*}
where 
\begin{align}
\widehat{\mu}_{\theta}=&\frac{1}{M}\sum_{m=1}^{M}S\left(u^{(m)}\right) & \widehat{\Sigma}_{\theta}&=\frac{ss^{T}}{M-1},\label{eq:sl_params}
\end{align}
with $s=\left(S\left(u_{1}\right)-\widehat{\mu}_{\theta},...,S\left(u_{M}\right)-\widehat{\mu}_{\theta}\right).$
\citet{Wood2010f} applies this method in a setting where the summary
statistics are regression coefficients, motivated by their distribution
being approximately normal. One of the approximations inherent in
this method, as in ABC, is the use of summary statistics rather than
the whole dataset. However, unlike ABC, there is no need to choose
a bandwidth $\epsilon$: this approximation is replaced with that
arising from the discrepancy between the normal approximation and
the exact distribution of the chosen summary statistic. The SL method remains
approximate even if the summary statistic distribution is Gaussian as $\widehat{f}_{\mbox{SL}}$ is not an unbiased
estimate of the density and so the exact-approximation results do not apply. Rather,
this is a special case of noisy MCMC, and we do not expect the additional
bias introduced by estimating the parameters of $\widehat{f}_{\mbox{SL}}$
to have large effects on the results, even if the parameters are estimated
via an internal MCMC chain targeting $f(\cdot|\theta)$ as described
in section \ref{sub:Russian-Roulette}.

SL is related to a number of other simulation based algorithms under
the umbrella of Bayesian indirect inference \citep{Drovandi2013}.
This suggests a number of extensions to some of the methods presented
in this paper that we do not explore here.

\subsection{Bayesian model comparison}
The main focus of this paper is estimating the \emph{marginal likelihood}
(also termed the \emph{evidence}) 
\[
p(y)=\int p(\theta)f(y|\theta)d\theta
\]
and \emph{Bayes' factors}:\emph{ }ratios of evidences for different
models ($M_{1}$ and $M_{2}$, say), $\mbox{BF}_{12}={p(y|M_{1})}/{p(y|M_{2})}$.
These quantities cannot usually be estimated reliably from MCMC output,
and commonly used methods for estimating them require $f(y|\theta)$
to be tractable in $\theta$. This leads \citet{Friel2013e} to label
their estimation as \emph{``triply intractable''} when $f$ has
an INC. To our knowledge the only published approach to estimating
the evidence for such models is in \citet{Friel2013e}, with this
paper also giving one of the only approaches to estimating BFs in
this setting. For estimating BFs, ABC provides a viable alternative
\citep{Grelaud2009}, at least for models within the exponential family.
%as long as the approximations in this approach
%described in section \ref{sub:Approximate-Bayesian-computation} are
%not too large.

\citet{Friel2013e} starts from Chib's approximation, 
\begin{equation}
\widehat{p}(y)=\frac{f(y|\widetilde{\theta})p(\widetilde{\theta})}{\widehat{\pi}(\widetilde{\theta}|y)},\label{eq:chib}
\end{equation}
where $\widetilde{\theta}$ can be an arbitrary value of $\theta$
and $\widehat{\pi}$ is an approximation to the posterior distribution.
Such an approximation is intractable when $f$ has an INC. \citet{Friel2013e}
devises a ``population'' version of the exchange algorithm that
simulates points $\theta^{(p)}$ from the posterior distribution,
and which also gives an estimate $\widehat{Z}(\theta^{(p)})$ of the
INC at each of these points. The points $\theta^{(p)}$ can be used
to find a kernel density approximation $\widehat{\pi}$, and estimates
$\widehat{Z}(\theta^{(p)})$ of the INC. These are then used in a
number of evaluations of \eqref{eq:chib} at points (generated
by the population exchange algorithm) in a region of high posterior
density, which are then averaged to find an estimate of the evidence.
This method has a number of useful properties (including that it may
be a more efficient approach for parameter inference than the standard
exchange algorithm), but for evidence estimation it suffers the limitation
of using a kernel density estimate which means that, as noted in the
paper, its use is limited to low-dimensional parameter spaces.

In this paper we explore the alternative approach of methods based
on IS, making use of the likelihood approximations described earlier
in this section. These IS methods are outlined in section \ref{sec:Importance-sampling-approaches}.
In section \ref{sec:Importance-sampling-approaches} we note the good empirical
performance of an inexact-approximate method and examine such approaches  in
more detail. As IS is itself not readily applicable to high dimensional parameter
spaces, in section \ref{sec:Sequential-Monte-Carlo} we look at natural
extensions to the IS methods based on SMC. Particular care is required when
considering approximations within iterative algorithms: we  provide a preliminary study
of approximation in this context demonstrating theoretically that the
resulting error can be controlled uniformly in time, under very favorable
assumptions. This, and the associated empirical study are intended to provide
motivation and proof of concept; caution is still required if approximation is
used within such methods in practice but the results presented suggest that
further investigation is warranted.  The algorithms presented later in the
paper are viable alternatives to the MCMC approaches to parameter estimation
described in this section, and may outperform the corresponding MCMC approach
in some cases. In particular they all automatically make use of a population
of points, an idea previously explored in the MCMC context by
\citet{Caimo2011} and \citet{Friel2013e}. In section \ref{sec:Conclusions} we
draw conclusions. 

\section{Importance sampling approaches\label{sec:Importance-sampling-approaches}}
In this section we investigate the use of IS for estimating the evidence
and BFs for models with INCs. We consider an ``ideal'' importance
sampler that simulates $P$ points $\left\{ \theta^{(p)}\right\} _{p=1}^{P}$
from a proposal $q(\cdot)$ and calculates their weight, in the presence
of an INC, using 
\begin{eqnarray}
\widetilde{w}^{(p)} & = & \frac{p(\theta^{(p)})\gamma(y|\theta^{(p)})}{q(\theta^{(p)})Z(\theta^{(p)})},\label{eq:is_z}
\end{eqnarray}
with an estimate of the evidence given by 
\begin{equation}
\widehat{p}(y)=\frac{1}{P}\sum_{p=1}^{P}\widetilde{w}^{(p)}.\label{eq:is_ml}
\end{equation}
To estimate a BF we simply take the ratio of estimates of the evidence
for the two models under consideration. However, the presence of the
INC in the weight expression in \eqref{eq:is_z} means that
importance samplers cannot be directly implemented for these models.
To circumvent this problem we will investigate the use of the techniques
described in section \ref{sub:Parameter-inference} in importance
sampling. We begin by looking at exact-approximation based methods
in section \ref{sub:Auxiliary-variable-IS}. We then examine the use
to approximate likelihoods based on simulation, including ABC and
SL in section \ref{sub:Simulation-based-methods}, before looking
at the performance of all of these methods on a toy example in section
\ref{sub:Toy-example}. Finally, in sections \ref{sub:Application-to-social}
and \ref{sub:Application-to-Ising} we examine applications to exponential
random graph models (ERGMs) and Ising models, the latter of which
leads us to consider the use of inexact-approximations in IS (first introduced in section \ref{sub:IS-with-biased}).

\subsection{Auxiliary variable IS\label{sub:Auxiliary-variable-IS}}
To avoid the evaluation of the INC in \eqref{eq:is_z}, we
propose the use of the auxiliary variable method used in the MCMC
context in section \ref{sub:Single-and-multiple}. Specifically,
consider IS using the SAV target 
\[
p(\theta,u|y)\propto q_{u}(u|\theta,y)f(y|\theta)p(\theta),
\]
noting that it has the same normalizing constant as
$p(\theta|y) \propto f(y|\theta)p(\theta)$, with proposal 
\[
q(\theta,u)=f(u|\theta)q(\theta).
\]
This results in weights 
\begin{eqnarray*}
\widetilde{w}^{(p)} & = & \frac{q_{u}(u|\theta^{(p)},y)\gamma(y|\theta^{(p)})p(\theta^{(p)})}{\gamma(u|\theta^{(p)})q(\theta^{(p)})}\frac{Z(\theta^{(p)})}{Z(\theta^{(p)})}\\
 & = & \frac{\gamma(y|\theta^{(p)})p(\theta^{(p)})}{q(\theta^{(p)})}\frac{q_{u}(u|\theta^{(p)},y)}{\gamma(u|\theta^{(p)})},
\end{eqnarray*}
and the estimate \eqref{eq:is_ml}
of the evidence.

In this method, which we will refer to as single auxiliary variable
IS (SAVIS), we may view \linebreak $q_{u}(u|\theta^{(p)},y)/\gamma(u|\theta^{(p)})$
as an unbiased importance sampling (IS) estimator of $1/Z(\theta^{(p)})$.
Although we are using an unbiased estimator of the weights in place
of the ideal weights, the result is still an exact importance sampler.
SAVIS is an exact-approximate IS method, as seen previously in \citet{Fearnhead2010f},
\citet{Chopin2013} and \citet{Tran2013}. As in the MCMC setting,
to ensure the variance of estimators produced by this scheme is not
large we must ensure the variance of estimator of $1/Z(\theta^{(p)})$
is small. Thus in practice we found extensions to this basic algorithm
were useful: using multiple $u$ importance points for each proposed
$\theta^{(p)}$ as in \eqref{eq:1overz_sav_multiplepoints};
and using AIS, rather than simple IS, for estimating $1/Z(\theta^{(p)})$
as in \eqref{eq:1overz_mav_multiplepoints} (giving an algorithm
that we refer to as multiple auxiliary variable IS (MAVIS), in common
with the terminology in \citet{Murray2006}). Using $q_{u}(\cdot|\theta,y)=f(\cdot|\widehat{\theta})$,
as described in section \ref{sub:Single-and-multiple}, and $\gamma_{k}$
as in \eqref{eq:ais_seq}, we obtain 
\begin{equation}
\widehat{\frac{1}{Z(\theta)}}=\frac{1}{Z(\widehat{\theta})}\frac{1}{M}\sum_{m=1}^{M}\prod_{k=1}^{K}\frac{\gamma_{k}(u_{k-1}^{(m)}|\theta^{*},\theta,y)}{\gamma_{k-1}(u_{k-1}^{(m)}|\theta^{*},\theta,y)}.\label{eq:1overz_mavis}
\end{equation}
In this case the (A)IS methods are being used as unbiased estimators
of the ratio $Z(\widehat{\theta})/Z(\theta)$ and again SMC could be used in their place.

\selectlanguage{british}%

\subsection{Simulation based methods\label{sub:Simulation-based-methods}}

\selectlanguage{english}%
\citet{Didelot2011} investigate the use of the ABC approximation
when using IS for estimating marginal likelihoods. In this case the
weight equation becomes 
\[
\widetilde{w}^{(p)}=\frac{p(\theta^{(p)})\frac{1}{R}\sum_{r=1}^{R}\pi_{\epsilon}(S(y)|S(x_{r}^{(p)}))}{q(\theta^{(p)})},
\]
where $\left\{ x_{r}^{(p)}\right\} _{r=1}^{R}\sim f(\cdot|\theta^{(p)})$,
and using the notation from section \ref{sub:Approximate-Bayesian-computation}.
However, using these weights within \eqref{eq:is_ml} gives
an estimate for $p(S(y))$ rather than, as desired, an estimate of
the evidence $p(y)$. %The only way to obtain an estimate of the evidence
%from ABC is to use the full data, rather than taking summary statistics.

Fortunately, there are cases in which ABC may be used to estimate
BFs. \citet{Didelot2011} establish that, for the BF for two exponential
family models: if $S_{1}(y)$ is sufficient for the parameters in
model 1 and $S_{2}(y)$ is sufficient for the parameters in model
2, then using $S(y)=(S_{1}(y),S_{2}(y))$ gives 
\[
\frac{p(y|M_{1})}{p(y|M_{2})}=\frac{p(S(y)|M_{1})}{p(S(y)|M_{2})}.
\]
Outside the exponential family, making an appropriate choice of summary
statistics is harder \citep{Robert2011j,Prangle2013,Marin2013}.

Just as in the parameter estimation case, the use of a tolerance $\epsilon>0$
results in estimating an approximation to the true BF. An alternative
approximation, not previously used in model comparison, is to use
SL (as described in section \ref{sub:Synthetic-likelihood}). In this
case the weight equation becomes 
\[
\widetilde{w}^{(p)}=\frac{p(\theta^{(p)})\mathcal{N}\left(S(y);\widehat{\mu}_{\theta^{(p)}},\widehat{\Sigma}_{\theta^{(p)}}\right)}{q(\theta^{(p)})},
\]
where $\widehat{\mu}_{\theta},\widehat{\Sigma}_{\theta}$ are given
by \eqref{eq:sl_params}. As in parameter
estimation, this approximation is only appropriate if the normality
assumption is reasonable. The choice of summary statistics is as difficult as
in the ABC case. 

\subsection{Toy example\label{sub:Toy-example}}

In this section we have discussed three alternative methods for estimating
BFs: MAVIS, ABC and SL. To further understand their properties we
now investigate the performance of each method on a toy example.

Consider i.i.d. observations $y=\left\{ y_{i}\right\} _{i=1}^{n=100}$
of a discrete random variable that takes values in $\mathbb{N}$.
For such a dataset, we will find the BF for the models 
\begin{enumerate}
\item $y|\theta\sim\mbox{Poisson}(\theta)$, $\theta = \lambda \sim\mbox{Exp}(1)$
\begin{eqnarray*}
f_{1}\left(y|\theta\right) %& = & \prod_{i=1}^{n}\frac{\lambda^{y_{i}}\exp(-\lambda)}{y_{i}!}\\
& = & \prod_{i=1}^{n}\frac{\lambda^{y_{i}}}{y_{i}!} / \exp(-n\lambda)
\end{eqnarray*}

\item $y|\theta\sim\mbox{Geometric}(\theta)$, $\theta = p \sim\mbox{Unif}(0,1)$
\begin{eqnarray*}
f_{2}\left(y|\theta\right) %& = & \prod_{i=1}^{n}p(1-p)^{y_{i}}\\
 & = & \prod_{i=1}^{n}(1-p)^{y_{i}} / p^{-n}.
\end{eqnarray*}

\end{enumerate}
In both cases we have rewritten the likelihoods $f_{1}$ and $f_{2}$
in the form $\gamma(y|\theta)/Z(\theta)$ in order to use MAVIS. Due
to the use of conjugate priors the BF for these two models can be
found analytically. As in \citet{Didelot2011} we simulated (using
an approximate rejection sampling scheme) 1000 datasets for which
$\frac{p(y|M_{1})}{p(y|M_{1})+p(y|M_{2})}$ roughly uniformly cover
the interval {[}0.01,0.99{]}, to ensure that testing is performed
in a wide range of scenarios. For each algorithm we used the same
computational effort, in terms of the number of simulations ($100,000$)
from the likelihood (such simulations dominate the computational cost of all
of the algorithms considered). 

Our results are shown in figure \ref{fig:Bayes'-factors-for}, with
the algorithm-specific parameters being given in figure \ref{fig:A-box-plot}.
We note that we achieved better results for MAVIS when: devoting more
computational effort to the estimation of $1/Z(\theta)$ (thus we
used only 100 importance points in $\theta$-space, compared to 1000
for the other algorithms); and using more intermediate bridging distributions
in the AIS, rather than multiple importance points (thus, in equation
\eqref{eq:1overz_mavis} we used $K=1000$ and $M=1$). In the ABC case
we found that reducing $\epsilon$ much further than 0.1 resulted
in many importance points with zero weight (note that here, and throughout the paper we use the uniform kernel for $\pi_{\epsilon}$). From the box plots in
figure \ref{fig:A-box-plot}, we might infer that overall SL has outperformed
the other methods, but be concerned about the number of outliers.
Figures \ref{fig:The--of} to \ref{fig:The--of-2} shed more light
on the situations in which each algorithm performs well.

\begin{figure*}
\center
\subfloat[A box plot of the $\log$ of the estimated BF divided by the true
BF.\label{fig:A-box-plot}]{

\includegraphics[scale=0.27]{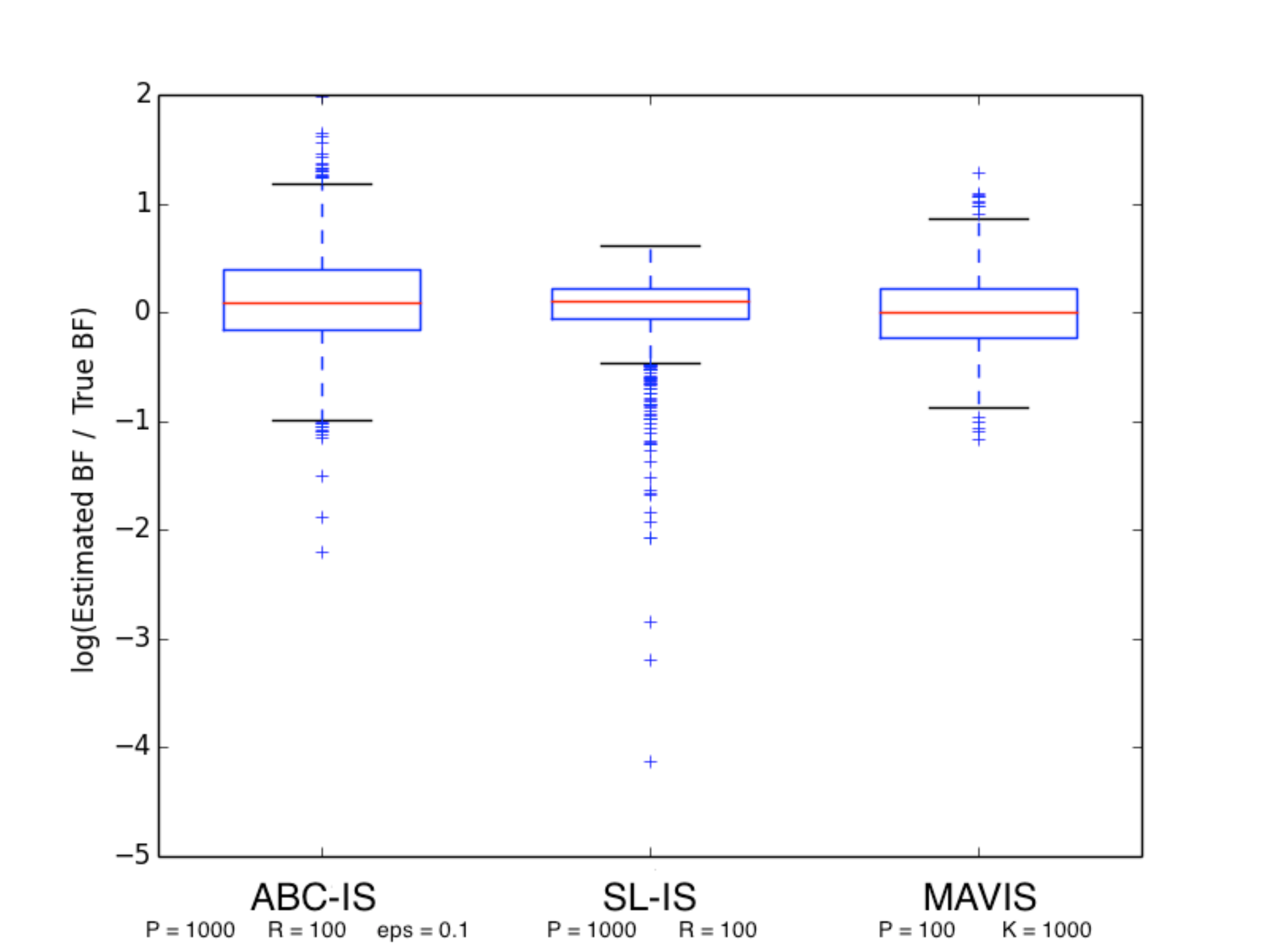}}\hspace*{5mm}\subfloat[The $\log$ of the BF estimated by ABC-IS against the $\log$ of the
true BF.\label{fig:The--of}]{

\selectlanguage{english}%
\includegraphics[scale=0.38]{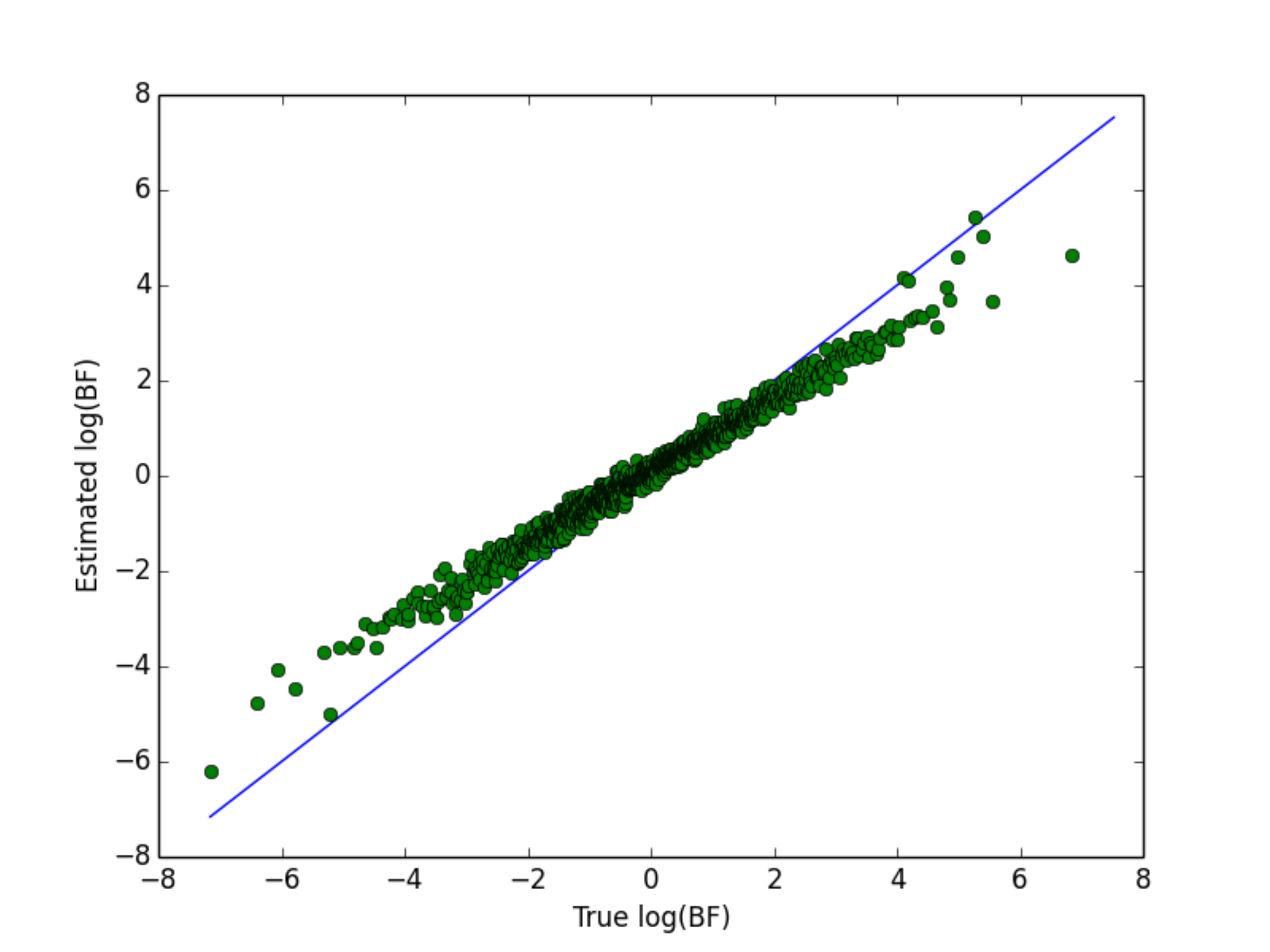}\selectlanguage{british}%
}

\subfloat[The $\log$ of the BF estimated by SL-IS against the $\log$ of the
true BF.\label{fig:The--of-1}]{

\selectlanguage{english}%
\includegraphics[scale=0.38]{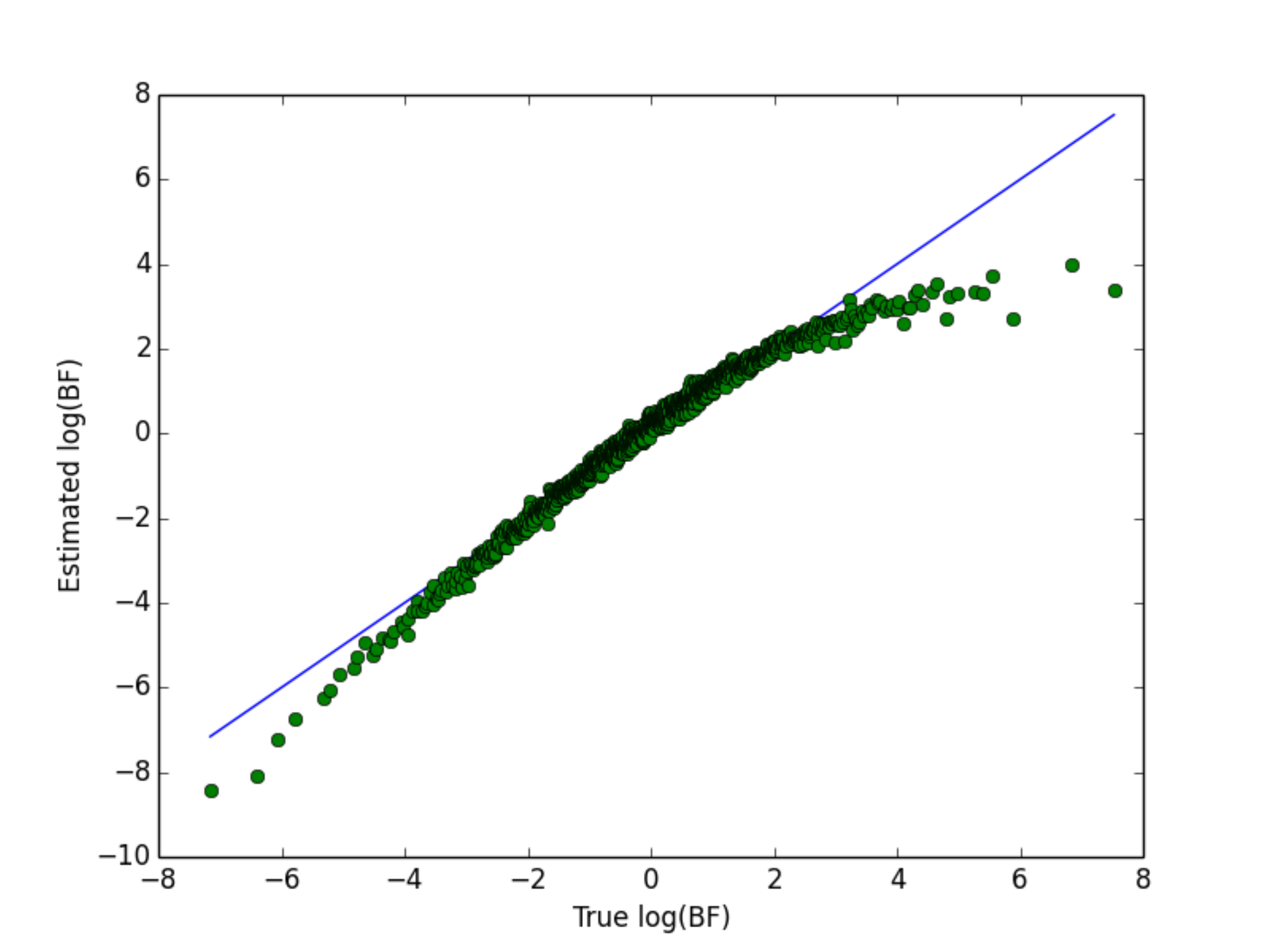}\selectlanguage{british}%
}\hspace*{5mm}\subfloat[The $\log$ of the BF estimated by MAVIS against the $\log$ of the
true BF.\label{fig:The--of-2}]{

\selectlanguage{english}%
\includegraphics[scale=0.38]{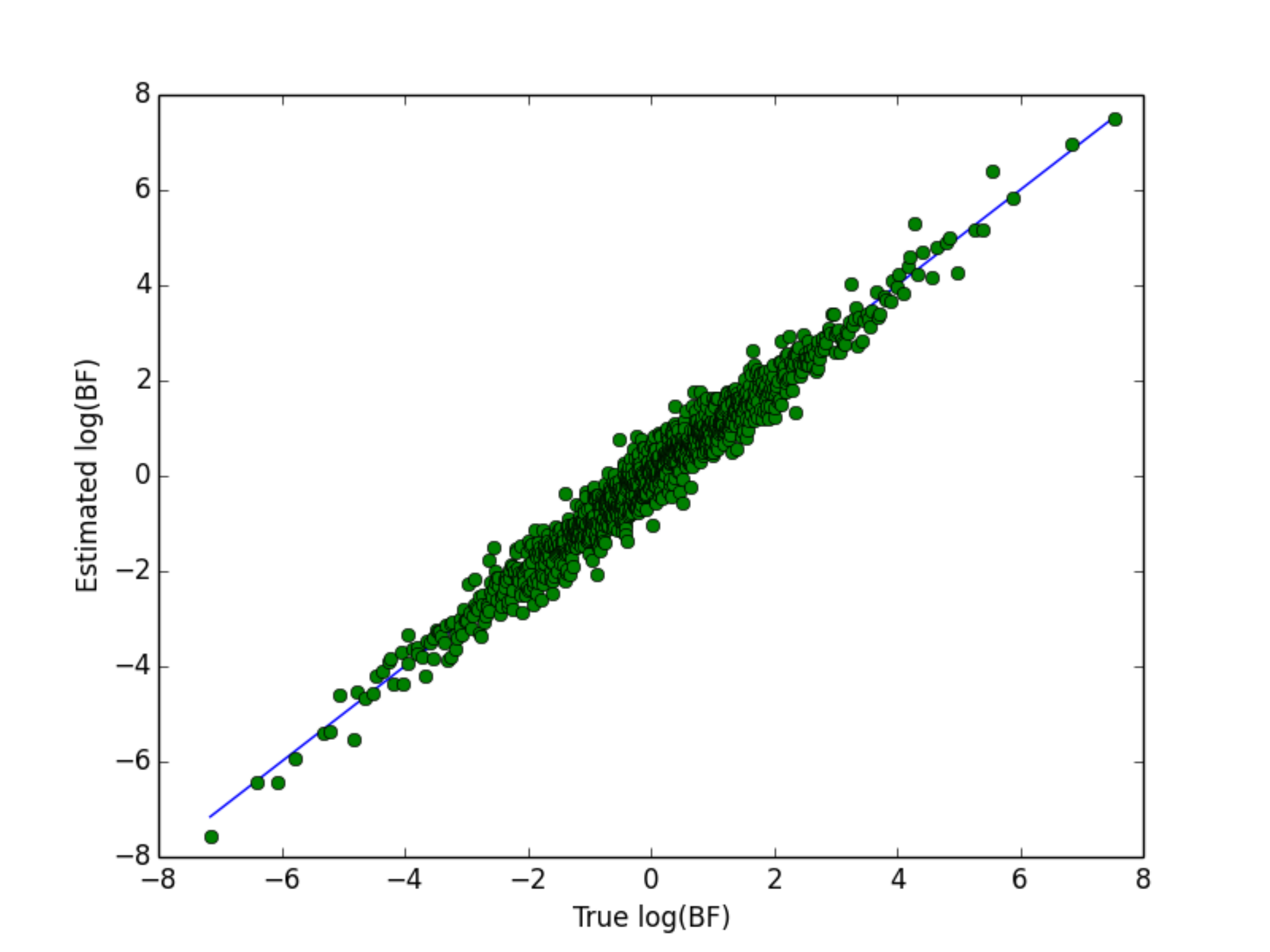}\selectlanguage{british}%
}\protect\caption{Bayes' factors for the Poisson and geometric models.\label{fig:Bayes'-factors-for}}
\end{figure*}

In figure \ref{fig:The--of} we observe that the non-zero $\epsilon$
results in a bias in the BF estimates (represented by the shallower
slope in the estimated BFs compared to the true values). In this example
we conclude that ABC has worked quite well, since the bias is only
pronounced in situations where the true BF favours one model strongly
over the other, and this conclusion would not be affected by the bias.
For this reason it might be more relevant in this example to consider
the deviations from the shallow slope, which are likely due to the
Monte Carlo variance in the estimator (which becomes more pronounced
as $\epsilon$ is reduced). We see that the choice of $\epsilon$
essentially governs a bias-variance trade-off, and that the difficulty
in using the approach more generally is that it is not easy to evaluate
whether a choice of $\epsilon$ that ensures a low variance also ensures
that the bias is not significant in terms of affecting the conclusions
that might be drawn from the estimated BF (see section \ref{sub:Application-to-social}).
Figure \ref{fig:The--of-1} suggests that SL has worked extremely
well (in terms of having a low variance) for the most important situations,
where the BF is close to 1. However, we note that the large biases
introduced due to the limitation of the Gaussian assumption when the
BF is far from 1. Figure \ref{fig:The--of-2} indicates that there
is little or no bias when using MAVIS, but that there is appreciable variance (due
to using IS on the relatively high-dimensional $u$-space).

These results highlight that the three methods will be most effective
in slightly different situations. The approximations in ABC and SL
introduce a bias, the effect of which might be difficult to assess.
In ABC (assuming sufficient statistics) this bias can be reduced by
an increased computational effort allowing a smaller $\epsilon$,
however it is essentially impossible to assess when this bias is ``small
enough''. SL is the simplest method to implement, and seems to work
well in a wide variety of situations, but the advice in \citet{Wood2010f}
should be followed in checking that the assumption of normality is
appropriate. MAVIS is limited by the need to perform importance sampling
on the high-dimensional $(\theta,u)$ space %(a problem avoided by
%using summary statistics in the other methods and mitigated by the
%possibility of using AIS), 
but consequently avoids specifying summary statistics, its bias is small, and this
method is able to estimate the evidence of individual models.

\subsection{Application to social networks\label{sub:Application-to-social}}

In this section we use our methods to compare the evidence for two
alternative ERGMs for the Gamaneg data previously analysed in
\citet{Friel2013e} (who illustrate the data in their figure 3). An ERGM has
the general form
\[
f(y|\theta)=\frac{1}{Z(\theta)}\exp\left(\theta^{T}S(y)\right),
\]
where $S(y)$ is a vector of statistics of a network $y$ and $\theta$
is a parameter vector of the same length. We take $S(y)=$ $(\#$ of edges $)$
in model 1 and $S(y)=($\# of edges, \# of two stars$)$ in
model 2 . As in \citet{Friel2013e} we use the prior $p(\theta)=\mathcal{N}(\theta;0,25I)$.

% \begin{figure}
% \selectlanguage{english}%
% \includegraphics[scale=0.25]{gamaneg}\foreignlanguage{british}{\protect\caption{The Gamaneg data.\label{fig:The-Gamaneg-data.}}
% }\selectlanguage{british}%
% \end{figure}
%
% Suppressed as per referee recommendation.

\selectlanguage{english}%
Using a computational budget of $10^{5}$ simulations from the likelihood
(each simulation consisting of an internal MCMC run of length 1000
as a proxy for an exact sampler, as described in section \ref{sub:Russian-Roulette}),
\citet{Friel2013e} finds that the evidence for model 1 is $\sim37\times$
that for model 2. Using the same computational budget for our methods,
consisting of 1000 importance points (with 100 simulations from the
likelihood for each point), we obtained the results shown in Table
\ref{tab:gamaneg}. 

\selectlanguage{british}%
\begin{comment}
\begin{compactitem}
\item ABC:

\begin{compactitem}
\item $\epsilon=0.1$ gives $\widehat{p}(y|M_{1})/\widehat{p}(y|M_{2})\approx4$; 
\item $\epsilon=0.05$ gives $\widehat{p}(y|M_{1})/\widehat{p}(y|M_{2})\approx20$,
but this estimate is based on only 5 points with non-zero weight. 
\end{compactitem}
\item SL obtains $\widehat{p}(y|M_{1})/\widehat{p}(y|M_{2})\approx40$. 
\item MAVIS finds

\begin{compactitem}
\item $\log\left[\widehat{p}(y|M_{1})\right]=-69.6$, $\log\left[\widehat{p}(y|M_{2})\right]=-73.3$ 
\item giving $\widehat{p}(y|M_{1})/\widehat{p}(y|M_{2})\approx41$. \end{compactitem}
\end{compactitem}
\end{comment}

\begin{table}
\begin{centering}
\begin{tabular}{|c|c|c|c|c|}
\hline 
 & ABC ($\epsilon=0.1$) & ABC ($\epsilon=0.05$) & SL & MAVIS\tabularnewline
\hline 
\hline 
$\frac{\hat{p}(y|M_{1})}{\hat{p}(y|M_{2})}$ & 4 & 20 & 40 & 41\tabularnewline
\hline 
\end{tabular}
\par\end{centering}

\protect\caption{Model comparison results for Gamaneg data. Note that the ABC ($\epsilon=0.05$)
estimate was based upon just 5 sample points of non-zero weight. MAVIS
also provides estimates of the individual evidence ($\log\left[\widehat{p}(y|M_{1})\right]=-69.6$,
$\log\left[\widehat{p}(y|M_{2})\right]=-73.3$).\label{tab:gamaneg}}
\end{table}

This example highlights the issue with the bias-variance trade-off
in ABC, with $\epsilon=0.1$ having too large a bias and $\epsilon=0.05$
having too large a variance. SL performs well --- in this particular
case the Gaussian assumption appears to be appropriate. One might
expect this, since the statistics are sums of random variables. However,
we note that this is not usually the case for ERGMs, particularly
when modelling large networks, and that SL is a much more appropriate
method for inference in the ERGMs with local dependence \citep{Schweinberger2015}.
A more sophisticated ABC approach might exhibit improved
performance, possibly outperforming SL. However, the appeal of SL
is in its simplicity, and we find it to be a useful method for obtaining
good results with minimal tuning.

\subsection{IS with biased weights\label{sub:IS-with-biased}}
The implementation of MAVIS in the previous section is not an exact-approximate
method for two reasons: 
\begin{enumerate}
\item An internal MCMC chain was used in place of an exact sampler; 
\item The $1/Z(\widehat{\theta})$ term in \eqref{eq:1overz_mavis}
was estimated before running this algorithm (by using a standard SMC
method, with initial distribution being the Bernoulli random graph (which can be simulated from exactly) and final distribution $\propto \gamma(\cdot | \widehat{\theta})$ to estimate $Z(\widehat{\theta})$ (being the normalising constant of $\gamma$), and taking the reciprocal)
with this fixed estimate being used throughout. 
\end{enumerate}

However, in practice, we tend to find that such ``inexact-approximations''
do not introduce large errors into \linebreak Bayes' factor estimates, particularly
when compared to standard implementations of ABC (as seen in the previous
section). %We investigate some of the theoretical aspects of such approximations
%in this section \ref{sec:IS-and-SMC}. 

This example suggests that %in section \ref{sub:Application-to-Ising} indicates that
in practice it may sometimes be advantageous to use biased rather
than unbiased estimates of importance weights within a random weight
IS algorithm: an observation that is somewhat analogous to that made
in \citet{Alquier2014} in the context of MCMC. This section provides
an initial theoretical exploration as to whether this might be a useful strategy
in IS. %In section \ref{sub:IS-with-biased} we consider the situation
%of IS, and in section \ref{sub:SMC-with-biased} that of SMC.

In order to analyse the behaviour of importance sampling with biased
weights, we consider biased estimates of the weights in equation \eqref{eq:is_ml}. Let
\[
w(\theta):=\frac{p(\theta)\gamma(y|\theta)}{Z(\theta)q(\theta)}.
%p(y) = \int \frac{p(\theta)\gamma(y|\theta)}{Z(\theta)} d\theta.
\]
We consider biased randomised weights that admit an additive decomposition,
\[
\grave{w}(\theta):= w(\theta) + b(\theta)+\grave{V}_{\theta},
\]
in which $b(\theta) = \mathbb{E}[\grave{w}(\theta)|\theta] - w(\theta)$ is a deterministic function describing the
bias of the weights and $\grave{V}_{\theta}$ is a random variable (more
precisely, there is an independent copy of such a random variable
associated with every particle), which conditional upon $\theta$
is of mean zero and variance $\grave{\sigma}_{\theta}^{2} = \textrm{Var}(\grave{w}(\theta)|\theta)$. This decomposition
will not generally be available in practice, but is flexible enough
to allow the formal description of many settings of interest. For
instance, one might consider the algorithms presented here %in section \ref{sec:Importance-sampling-approaches}
by setting $b(\theta)$ to the (conditional) expected value of the
difference between the approximate and exact weights and $\grave{V}_{\theta}$
to the difference between the approximate weights and their expected
value.

We have immediately that the bias of such an estimate is, using a subscript of
$q$ to denote expectations and variances with respect to $q(\theta)$, $\mathbb{E}_{q}[b(\theta)]$.
By a simple application of the law of total variance, its variance is 
\[
\frac{1}{P}\textrm{Var}_{q}(\grave{w}(\theta)) = \frac{1}{P}\left\{ \textrm{Var}_{q}\left[w(\theta)+b(\theta)\right]+\mathbb{E}_{q}\left[\grave{\sigma}_{\theta}^{2}\right]\right\} 
\]
Consequently, the mean squared error of this estimate is: 
\begin{eqnarray*}
%\textrm{MSE}(w^{\prime})=&
\frac{1}{P}\left\{ \textrm{Var}_{q}\left[w(\theta)+b(\theta)\right]+\mathbb{E}_{q}[\grave{\sigma}_{\theta}^{2}]\right\} +\mathbb{E}_{q}[b(\theta)]^{2}.\nonumber
\end{eqnarray*}
If we compare such a biased estimator with a second estimator in which
we use the same proposal distribution but instead use an unbiased random
weight 
\[
\acute{w}(\theta):=w(\theta)+\acute{V}(\theta),
\]
where $\acute{V}(\theta)$ has conditional expectation zero and
variance $\acute{\sigma}_{\theta}^{2}$, then it's clear that the
biased estimator has smaller mean squared error for small enough samples
if it has sufficiently smaller variance, i.e., when (assuming \foreignlanguage{english}{$\mathbb{E}_{q}[b(\theta)]^{2}>0$},
otherwise one estimator dominates the other for all sample sizes):
%
%\begin{align*}
% & \textrm{Var}_{q(\cdot|\theta)}\left[\frac{p(\theta)\gamma(\cdot|\theta)}{Z(\theta)q(\cdot|\theta)}+b(\cdot|\theta)\right]/P+\mathbb{E}_{q(\cdot|\theta)}[\sigma_{\theta}^{2}]/P+\mathbb{E}_{q(\cdot|\theta)}[b(\cdot|\theta)]^{2}\\
%< & \textrm{Var}_{q(\cdot|\theta)}\left[\frac{p(\theta)\gamma(\cdot|\theta)}{Z(\theta)q(\cdot|\theta)}\right]/P+\mathbb{E}_{q}[\tilde{\sigma}_{\theta}^{2}]/P\\
%\Leftrightarrow P< & \frac{\textrm{Var}_{q(\cdot|\theta)}\left[\frac{p(\theta)\gamma(\cdot|\theta)}{Z(\theta)q(\cdot|\theta)}\right]+\mathbb{E}_{q}[\tilde{\sigma}_{\theta}^{2}]-\textrm{Var}_{q(\cdot|\theta)}\left[\frac{p(\theta)\gamma(\cdot|\theta)}{Z(\theta)q(\cdot|\theta)}+b(\cdot|\theta)\right]-\mathbb{E}_{q(\cdot|\theta)}[\sigma_{\theta}^{2}]}{\mathbb{E}_{q(\cdot|\theta)}[b(\cdot|\theta)]^{2}}\\
%< & \frac{\mathbb{E}_{q(\cdot|\theta)}[\tilde{{\sigma}}_{\theta}^{2}-\sigma_{\theta}^{2}]-\textrm{Var}_{q(\cdot|\theta)}\left[b(\cdot|\theta)\right]-2\textrm{Cov}_{q(\cdot|\theta)}\left[\frac{p(\theta)\gamma(\cdot|\theta)}{Z(\theta)q(\cdot|\theta)},b(\cdot|\theta)\right]}{\mathbb{E}_{q(\cdot|\theta)}[b(\cdot|\theta)]^{2}}.
%\end{align*}
%
\begin{align*}
& \frac{1}{P}\left\{ \textrm{Var}_{q}\left[w(\theta)+b(\theta)\right]+\mathbb{E}_{q}[\grave{\sigma}_{\theta}^{2}]\right\} +\mathbb{E}_{q}[b(\theta)]^{2}\nonumber\\
<&
\frac{1}{P}\left\{ \textrm{Var}_{q}\left[w(\theta)\right]+\mathbb{E}_{q}[\acute{\sigma}_{\theta}^{2}]\right\} \nonumber
\end{align*}
which holds when $P$ is inferior to
{
\begin{eqnarray}
% P< & \frac{\textrm{Var}_{q}\left[\frac{p(\theta)\gamma(\cdot|\theta)}{Z(\theta)q(\cdot|\theta)}\right]+\mathbb{E}_{q}[\tilde{\sigma}_{\theta}^{2}]-\textrm{Var}_{q}\left[\frac{p(\theta)\gamma%(\cdot|\theta)}{Z(\theta)q(\cdot|\theta)}+b(\cdot|\theta)\right]-\mathbb{E}_{q}[\sigma_{\theta}^{2}]}{\mathbb{E}_{q}[b(\cdot|\theta)]^{2}}\nonumber\\
%< & 
%P < 
\frac{\mathbb{E}_{q}[\acute{\sigma}_{\theta}^{2}-\grave{\sigma}_{\theta}^{2}]-\textrm{Var}_{q}\left[b(\theta)\right]-2\textrm{Cov}_{q}\left[w(\theta),b(\theta)\right]}{\mathbb{E}_{q}[b(\theta)]^{2}}.\nonumber
\end{eqnarray}}
%The exact implementation of (\ref{eq:is_ml}) coincides to this unbiased estimator with
%$\acute{V}(\theta) = \tilde{w}(\theta) - p(y)$.

In the artificially simple setting in which $b(\theta)=b_{0}$ is
constant, this would mean that the biased estimator would have smaller
MSE for samples smaller than the ratio of the difference in variance
to the square of that bias suggesting that qualitatively a biased
estimator might be better if the square of the average bias is small
in comparison to the variance reduction that it provides. Given a
family of increasingly expensive biased estimators with progressively
smaller bias, one could envisage using such an argument to manage
the trade-off between less biased estimators and larger sample sizes. In
practice a negative covariance between $b(\theta)$ and $w(\theta)$ might also lead to
favourable performance by biased estimators.%, with the case $b(\theta)=p(y)-w(\theta)$
%and $\grave{V}(\theta) \equiv 0$ 

%See section \ref{sub:Application-to-Ising} for an empirical investigation
%of importance sampling estimators with biased weights.

\subsection{Applications to Ising models\label{sub:Application-to-Ising}}
In the current section we investigate this type of approach further empirically, estimating Bayes' factors from data simulated from
Ising models. In particular we reanalyse the data from \citet{Friel2013e},
which consists of 20 realisations from a first-order $10\times10$
Ising model and 20 realisations from a second-order $10\times10$
Ising model for which accurate estimates (via \citet{Friel2007})
of the evidence serve as a ground truth for comparison. We also analyse data from a $100\times100$ Ising model. % against which to compare our
%methods.

\subsubsection{$10\times 10$ Ising models\label{sub:10by10}}

As in the toy example, we examine several different configurations of the IS and AIS estimators of the $Z(\widehat{\theta})/Z(\theta)$ term in the weight \eqref{eq:is_z}, using different values of $M$, $K$ and $B$, the burn in of the internal MCMC, that yield the same computational cost (in terms of the number of Gibbs sweeps used to simulate from the likelihood). Note that for small values of $B$ these estimators are biased; a bias that decreases as $B$ increases.

The empirical results in \citet{Friel2013e}, use a total $2 \times 10^7$ Gibbs sweeps to estimate one Bayes' factor, to allow comparison of
our results with those in that paper. Here, estimating 
a marginal likelihood is done in three stages: firstly $\widehat{\theta}$
is estimated; followed by $Z(\widehat{\theta})$, then finally the
marginal likelihood. We took $\widehat{\theta}$ to be the posterior
expectation, estimated from a run of the exchange algorithm of $10,000$
iterations. $Z(\widehat{\theta})$ was then estimated using SMC with
an MCMC move, with 200 particles and 100 targets, with the $i$th
target being $\gamma_{i}(\cdot|\theta)=\gamma_{i}\left(\cdot|{i\theta}/{100}\right)$,
employing stratified resampling when the effective sample size (ESS;
\citet{Kong1994}) falls below 100. The total cost of these three stages is $5 \times 10^6$ Gibbs sweeps ($1/4$ of the cost of population exchange) with the final IS stage costing $2 \times 10^4$ sweeps ($1/1000$ of the cost of population exchange). We note that the cost of the first two stages has been chosen conservatively - less computational effort here can also yield good results. The importance proposal used in
all cases was a multivariate normal distribution, with mean and variance
taken to be the sample mean and variance from the initial run of the
exchange algorithm. This proposal would clearly not be appropriate
in high dimensions, but is reasonable for the low dimensional parameter
spaces considered here.  Figure \ref{fig:Box-plots-of} shows the results produced
by these methods in comparison with those from \citet{Friel2013e}.

\begin{figure*}
\includegraphics[scale=0.55]{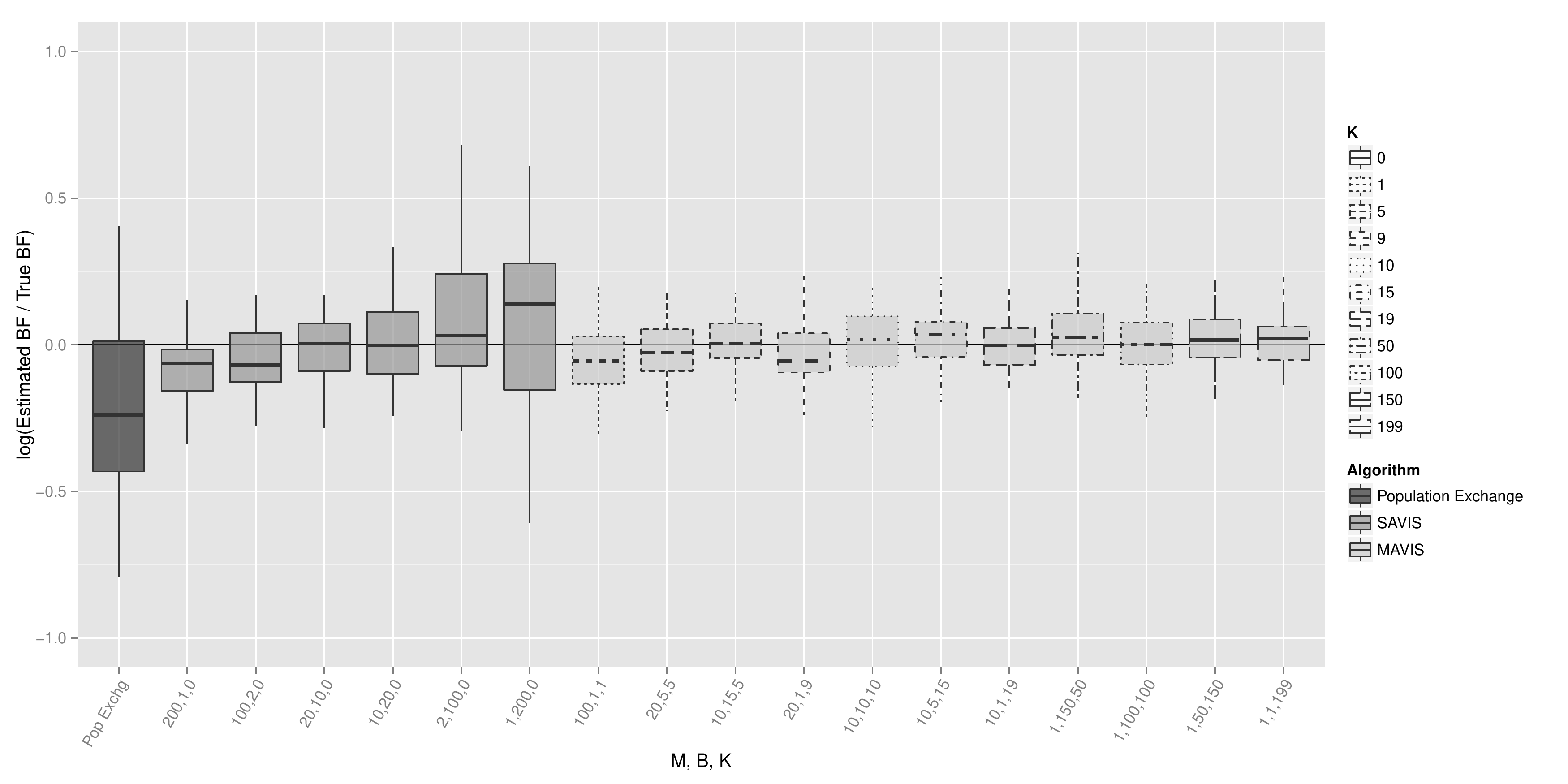}

\protect\caption{Box plots of the results of population exchange, SAVIS, and MAVIS on the Ising data.\label{fig:Box-plots-of}}
\end{figure*}

We observe: improvements of the new methods over population exchange; an overall robustness of the new methods to different choices of parameters; and that there is a bias-variance tradeoff in the ``internal'' estimate of $Z(\widehat{\theta})/Z(\theta)$ in terms of producing the best behaviour of the Bayes' factor estimates. Recall that as $B$ increases the bias of the internal estimate (the results of which can be observed in the results when using $B=0$) decreases, but for a fixed computational effort it is beneficial to use a lower $B$ and to instead increase $M$, using more importance points to decrease the variance. As in \citet{Alquier2014}, we observe that it may be useful to move away from the exact-approximate approaches, and in this case, to simply
use the best available estimator of $Z(\widehat{\theta})/Z(\theta)$ (taking
into account its statistical and computational efficiency) regardless
of whether it is unbiased. In this example there is little observed difference in using our fixed computational budget on more AIS moves ($K$) in place of using more importance points ($M$). In general we might expect using more AIS moves to be more productive when the estimates of the $Z(\widehat{\theta})/Z(\theta)$ for $\theta$ far from $\widehat{\theta}$ are required.

\subsubsection{$100\times 100$ Ising model\label{sub:100by100}}

In this section we use SAVIS for estimating the marginal likelihood for a first order Ising model on data of size $100\times 100$ pixels simulated from an Ising model with parameter $\theta=10$. Again, estimating a marginal likelihood is done in three stages: firstly $\widehat{\theta}$
is estimated; followed by $Z(\widehat{\theta})$, then finally the
marginal likelihood. The methods use for the first two stages are identical to those used in section \ref{sub:10by10}, as is the choice of proposal distribution. The third stage is performed using SAVIS with $M=100$ and $B=20$. From 20 runs of this third stage, a five-number summary of the $\log$ evidence estimates was (-5790.251, -5790.178, -5790.144, -5790.119, -5790.009), with the average ESS being 80.75. Note the low variance over these runs of the algorithm and the high ESS, which were also found for different configurations of the algorithm (including for more importance points and larger values of $M$ and $B$). One might expect this example to be more difficult than the $10\times 10$ grids considered in the previous section, due to the need to find good estimates of $Z(\widehat{\theta})/Z(\theta)$ that are here normalising constants of distributions on a space of higher dimensions. However, since the posterior has lower variance in this case, only values of $\theta$ close to $\widehat{\theta}$ are proposed, which makes estimating $Z(\widehat{\theta})/Z(\theta)$ much easier, yielding the good results in this section.

\subsection{Discussion}

In this section we have compared the use of ABC-IS, SL-IS, MAVIS (and
alternatives) for estimating marginal likelihoods and Bayes' factors.
The use of ABC for model comparison has received much attention, with
much of the discussion centring around appropriate \linebreak choices of summary
statistics. We have avoided this in our examples by using exponential
family models, but in general this remains an issue affecting both
ABC and SL. It is the use of summary statistics that makes ABC and
SL unable to provide evidence estimates. However, it is the use of
summary statistics, usually essential in these settings, that provides
ABC and SL with an advantage over MAVIS, in which importance sampling
must be performed over the high dimensional data-space. Despite this
disadvantage, MAVIS avoids the approximations made in the simulation
based methods (illustrated in figures \ref{fig:The--of} to \ref{fig:The--of-2},
with the accuracy depending primarily on the quality of the estimate
of $1/Z$ used). In section \ref{sub:Application-to-Ising} we saw
that there can be advantages of using biased, but lower variance estimates
in place of standard IS.

The main weakness of all of the methods described in this section
is that they are all based on standard IS and are thus not practical
for use when $\theta$ is high dimensional. In the next section we
examine the use of SMC samplers as an extension to IS for use on triply
intractable problems, and in this framework discuss further the effect
of inexact approximations.

\section{Sequential Monte Carlo approaches\label{sec:Sequential-Monte-Carlo}}

SMC samplers \citep{DelMoral2006c} are a generalisation of IS, in
which the problem of choosing an appropriate proposal distribution
in IS is avoided by performing IS sequentially on a sequence of target
distributions, starting at a target that is easy to simulate from,
and ending at the target of interest. In standard IS the number of
Monte Carlo points required in order to obtain a particular accuracy
increases exponentially with the dimension of the space, but \citet{Beskos2011c}
show (under appropriate regularity conditions) that the use of SMC
circumvents this problem and can thus be practically useful in high
dimensions.

In this section we introduce SMC algorithms for simulating from doubly
intractable posteriors which have the by-product that, like IS, they
also produce estimates of marginal likelihoods. \foreignlanguage{english}{We
note that, although here we focus on estimating the evidence, the
SMC sampler approaches based here are a natural alternative to the
MCMC methods described in section \ref{sub:Parameter-inference}.
and inherently use a ``population'' of Monte Carlo points (shown
to be beneficial on these models by \citet{Caimo2011}).} In section
\ref{sub:SMC-samplers-in} we describe these algorithms, before examining
an application to estimating the precision matrix of a Gaussian distribution
in high dimensions in section \ref{sub:Application-to-precision}. In
\ref{sec:biased-smc} we provide a preliminary investigation  of the
consequences of using biased weight estimates in an SMC framework.

\subsection{SMC samplers in the presence of an INC\label{sub:SMC-samplers-in}}

This section introduces two alternative SMC samplers for use on doubly
intractable target distributions. The first, marginal SMC, directly
follows from the IS methods in the previous section. The second, SMC-MCMC,
requires a slightly different approach, but is more computationally
efficient. Finally we briefly discuss \linebreak simulation-based SMC samplers
in section \ref{sub:Simulation-based-SMC-samplers}.

To begin, we introduce notation that is common to all algorithms that
we discuss. SMC samplers perform sequential IS using $P$ ``particles''
$\theta^{(p)}$, each having (normalised) weight $w^{(p)}$, using
a sequence of targets $\pi_{0}$ to $\pi_{T}$, with $\pi_{T}$ being
the distribution of interest, in our case $\pi(\theta|y)\propto p(\theta)f(y|\theta)$.
In this section we will take $\pi_{t}(\theta|y)\propto p(\theta)f_{t}(y|\theta)=p(\theta)\gamma_{t}(y|\theta)/Z_{t}(\theta)$.
At target $t$, a ``forward'' kernel $K_{t}(\cdot|\theta_{t-1}^{(p)})$
is used to move particle $\theta_{t-1}^{(p)}$ to $\theta_{t}^{(p)}$,
with each particle then being reweighted to give unnormalised weight
\begin{eqnarray*}
\widetilde{w}_{t}^{(p)} & = & %\frac{p(\theta_{t}^{(p)})f_{t}(y|\theta_{t}^{(p)})}{p(\theta_{t-1}^{(p)})f_{t-1}(y|\theta_{t-1}^{(p)})}\frac{L_{t-1}(\theta_{t}^{(p)},\theta_{t-1}^{(p)})}{K_{t}(\theta_{t-1}^{(p)},\theta_{t}^{(p)})}\\
% & = & 
\frac{p(\theta_{t}^{(p)})\gamma_{t}(y|\theta_{t}^{(p)})}{p(\theta_{t-1}^{(p)})\gamma_{t-1}(y|\theta_{t-1}^{(p)})}\frac{Z_{t-1}(\theta_{t-1}^{(p)})}{Z_{t}(\theta_{t}^{(p)})}%\nonumber\\
% & & \qquad 
\frac{L_{t-1}(\theta_{t}^{(p)},\theta_{t-1}^{(p)})}{K_{t}(\theta_{t-1}^{(p)},\theta_{t}^{(p)})}.
\end{eqnarray*}
Here, $L_{t-1}$ represents a ``backward'' kernel that we chose
differently in the alternative algorithms below. We note the presence
of the INC, which means that this algorithm cannot be implemented
in practice in its current form. The weights are then normalised to
give $\left\{ w_{t}^{(p)}\right\} $, and a resampling step is carried
out. In the following sections the focus is on the reweighting step: this is the main difference between the different algorithms.
For more detail on these methods, see \citet{DelMoral2007a}. %
\begin{comment}
The full detail of the algorithms used in this paper can be found
in \ref{sec:Algorithm-summary}.
\end{comment}

\citet{Zhou2013} describe how BFs can be estimated directly by SMC
samplers, simply by taking $\pi_{1}$ to be one model and $\pi_{T}$
to be the other (with the $\pi_{t}$ being intermediate distributions).
This idea is also explored for Gibbs random fields in \citet{Friel2013e}.
However, the empirical results in \citet{Zhou2013} suggest that in
some cases this method does not necessarily perform better than estimating
marginal likelihoods for the two models separately and taking the
ratio of the estimates. Here we do not investigate these algorithms
further, but note that they offer an alternative to estimating the
marginal likelihood separately.

\subsubsection{Random weight SMC Samplers\label{sub:Random-weight-SMC}}

\paragraph{SMC with an MCMC kernel\label{sub:SMC-with-an}}

Suppose we were able to use a reversible MCMC kernel $K_{t}$ with
invariant distribution $\pi_{t}(\theta|y)\propto p(\theta)f_{t}(y|\theta)$,
and choose the $L_{t-1}$ kernel to be the time reversal of $K_{t}$
with respect to its invariant distribution, we obtain the following
incremental weight: 
\begin{eqnarray}
\widetilde{w}_{t}^{(p)} %& = & \frac{p(\theta_{t}^{(p)})f_{t}(y|\theta_{t}^{(p)})}{p(\theta_{t-1}^{(p)})f_{t-1}(y|\theta_{t-1}^{(p)})}\frac{L_{t-1}(\theta_{t}^{(p)},\theta_{t-1}^{(p)})}{K_{t}(\theta_{t-1}^{(p)},\theta_{t}^{(p)})}\label{eq:smc-mcmc1}\\
% & = & \frac{p(\theta_{t}^{(p)})f_{t}(y|\theta_{t}^{(p)})}{p(\theta_{t-1}^{(p)})f_{t-1}(y|\theta_{t-1}^{(p)})}\frac{p(\theta_{t-1}^{(p)})f_{t}(y|\theta_{t-1}^{(p)})}{p(\theta_{t}^{(p)})f_{t}(y|\theta_{t}^{(p)})}\\
 & = & \frac{\gamma_{t}(y|\theta_{t-1}^{(p)})}{\gamma_{t-1}(y|\theta_{t-1}^{(p)})}\frac{Z_{t-1}(\theta_{t-1}^{(p)})}{Z_{t}(\theta_{t-1}^{(p)})}.\label{eq:smc-mcmc}
\end{eqnarray}
Once again, we cannot evaluate this incremental weight due to the
presence of a ratio of normalising constants. Also, such an MCMC kernel
cannot generally be directly constructed --- the MH update itself involves
evaluating the ratio of intractable normalising constants. However,
appendix \ref{sec:Using-SAV-and} shows that precisely the same weight
update results when using either SAV or exchange MCMC moves in place
of a direct MCMC step.

In order that this approach may be implemented we might consider,
in the spirit of the approximations suggested in section \ref{sec:Importance-sampling-approaches},
using an estimate of the ratio term $Z_{t-1}(\theta_{t-1}^{(p)})/Z_{t}(\theta_{t-1}^{(p)})$.
For example, an unbiased IS estimate is given by 
\begin{equation}
\widehat{\frac{Z_{t-1}(\theta_{t-1}^{(p)})}{Z_{t}(\theta_{t-1}^{(p)})}}=\frac{1}{M}\sum_{m=1}^{M}\frac{\gamma_{t-1}(u_{t}^{(m,p)}|\theta_{t-1}^{(p)})}{\gamma_{t}(u_{t}^{(m,p)}|\theta_{t-1}^{(p)})},\label{eq:smc_unbiased_Z}
\end{equation}
where $u_{t}^{(m,p)}\sim f_{t}(\cdot|\theta_{t-1}^{(p)})$. Although
this estimate is unbiased, we note that the resultant algorithm does
not have precisely the same extended space interpretation as the methods
in \citet{DelMoral2006c}. Appendix \ref{sec:An-Extended-Space} gives
an explicit construction for this case, which incorporates a pseudomarginal-type
construction \citep{Andrieu2009}.

\paragraph{Data point tempering\label{sub:Data-point-tempering}}

For the SMC approach to be efficient we require that the sequence of distributions
$\{\pi_{t}\}$ be chosen such that $\pi_{0}$ is easy to simulate
from, $\pi_{T}$ is the target of interest and the intermediate distributions provide
a ``route'' between them. For the applications in this paper we
found the data tempering approach of \citet{Chopin2002} to be particularly
useful. Suppose that the data $y$ consists of $N$ points, and that
$N$ is exactly divisible by $T$ for ease of exposition. We then
%propose to 
take $\pi_{0}(\theta|y)=p(\theta)$ and for $t=1,...T$
$\pi_{t}(\theta|y)=p(\theta)f_{t}(y|\theta)$ with 
\begin{equation}
f_{t}(y|\theta)=f\left(y_{1:Nt/T}|\theta\right),
\end{equation}
i.e. essentially we incorporate $N/T$ additional data points for each increment of
$t$. On this sequence of targets we then propose to use the SMC sampler
with an MCMC kernel as described in the previous section. The only
slightly non-standard point is the estimation of $Z_{t-1}(\theta_{t-1}^{(p)})/$ $Z_{t}(\theta_{t-1}^{(p)})$,
since in this case $Z_{t-1}(\theta_{t-1}^{(p)})$
and $Z_{t}(\theta_{t-1}^{(p)})$ are the normalising constants of
distributions on different spaces. We use 
{\small
\begin{equation}
\widehat{\frac{Z_{t-1}(\theta_{t-1}^{(p)})}{Z_{t}(\theta_{t-1}^{(p)})}}=\frac{1}{M}\sum_{m=1}^{M}\frac{\gamma_{t-1}(v_{t}^{(m,p)}|\theta_{t-1}^{(p)})q_{w}(w_{t}^{(m,p)})}{\gamma_{t}(u_{t}^{(m,p)}|\theta_{t-1}^{(p)})}\label{eq:SMC_Z_tempering-1}
\end{equation}}
where $u_{t}^{(m,p)}\sim f_{t}(\cdot|\theta_{t-1}^{(p)})$ and $v_{t}^{(m,p)}$
and $w_{t}^{(m,p)}$ are subvectors of $u_{t}^{(m,p)}$. $w_{t}^{(m,p)}$
is in the space of the additional variables added when moving from
$f_{t-1}$ to $f_{t}$ (providing the argument in an arbitrary auxiliary
distribution $q_{w}(\cdot)$) and $v_{t}^{(m,p)}$ is in the space
of the existing variables. For $t=1$ this becomes 
\begin{equation}
\widehat{\frac{1}{Z_{1}(\theta_{0}^{(p)})}}=\frac{1}{M}\sum_{m=1}^{M}\frac{q_{w}(u_{1}^{(m,p)})}{\gamma_{1}(u_{1}^{(m,p)}|\theta_{0}^{(p)})} \label{eq:SMC_Z_tempering_step1}
\end{equation}
with $u_{1}^{(m,p)}\sim f_{t}(.|\theta_{0}^{(p)})$.

Analogous to the SAV method, a sensible choice for $q_{w}(w)$ might
be to use $f\left(w|\widehat{\theta}\right)$, where $w$ is on the
same space as \foreignlanguage{british}{$N/T$ data points. The normalising
constant for this distribution needs to be known to calculate the
importance weight in \eqref{eq:SMC_Z_tempering} so, as earlier,
we advocate estimating this in advance of running the SMC sampler
(aside from when the data points are added one at a time - in this
case the normalising constant may usually be found analytically).
Note that if $y$ does not consist of i.i.d. points, it is useful
to choose the order in which data points are added such that the same
$q_{w}$ (each with the same normalising constant) can be used in
every weight update. For example, in an Ising model, the requirement
would be to add the same shape grid of variables at each target.}

\paragraph{Marginal SMC}
An alternative method commonly used in ABC applications arises from the use of an approximation
to the optimal backward kernel \citep{Peters2005,Klaas2005}. In this
case the weight update is 
\begin{eqnarray}
\widetilde{w}_{t}^{(p)} %& = & \frac{p(\theta_{t}^{(p)})f_{t}(y|\theta_{t}^{(p)})}{\sum_{r=1}^{P}w_{t-1}^{(r)}K_{t}(\theta_{t}^{(p)}|\theta_{t-1}^{(r)})}\\
 & = & \frac{p(\theta_{t}^{(p)})\gamma_{t}(y|\theta_{t}^{(p)})}{Z_{t}(\theta_{t}^{(p)})\sum_{r=1}^{P}w_{t-1}^{(r)}K_{t}(\theta_{t}^{(p)}|\theta_{t-1}^{(r)})}\label{eq:L_opt_back_Z}
\end{eqnarray}
for an arbitrary forward kernel $K_{t}$. 
% This method is quite widely
% used, but is a little non-standard in terms of the framework of \citet{DelMoral2006c},
% since it uses a Monte Carlo estimate of an importance weight defined
% on the marginal $\theta$-space at target $t$, compared to the usual
% weight on the entire past history of each particle.
This results in
a computational complexity of $O(P^{2})$ 
%(\foreignlanguage{british}{although
%as noted by \citet{Klaas2005} that this may often be reduced to $O(P\log(P))$
%at the cost of negligible bias}), 
compared to $O(P)$ for a standard
SMC method, but we include it here in order to note that  the $1/Z(\cdot)$ term in \eqref{eq:L_opt_back_Z}
% just as does the corresponding expression
%(\eqref{eq:is_z}) for standard IS. It would be possible to use 
could be dealt with in the same way as in the simple IS case.
%that same approach for avoiding the calculation of $Z(\cdot)$ as in
%section \ref{sub:Auxiliary-variable-IS}: to employ the SAV
%target and proposal within the SMC algorithm.
%
%We still have an intractable normalising constant in the denominator.
%Now let us try 
Considering the SAVM posterior, where in target $t$ we use the
distribution $q_{u}$ for the auxiliary variable $u_{t}$, and the
SAVM proposal, where $u_{t}^{(p)}\sim f_{t}(\cdot|\theta_{t}^{(p)})$ we arrive at
the weight update:
\begin{eqnarray*}
\widetilde{w}_{t}^{(p)} %& = & \frac{q_{u}(u_{t}^{(p)}|\theta_{t}^{(p)},y)p(\theta_{t}^{(p)})f_{t}(y|\theta_{t}^{(p)})}{\sum_{r=1}^{P}K_{t}(\theta_{t}^{(p)}|\theta_{t-1}^{(r)})f_{t}(u_{t}^{(p)}|\theta_{t}^{(p)})w_{t-1}^{(r)}}\\
 & = & \frac{q_{u}(u_{t}^{(p)}|\theta_{t}^{(p)},y)p(\theta_{t}^{(p)})\gamma_{t}(y|\theta_{t}^{(p)})}{\gamma_{t}(u_{t}^{(p)}|\theta_{t}^{(p)})\sum_{r=1}^{P}w_{t-1}^{(r)}K_{t}(\theta_{t}^{(p)}|\theta_{t-1}^{(r)})}.
\end{eqnarray*}
in which normalising constant appears in this weight update. We include this
approach for completeness but do not investigate it further in this paper.

\subsubsection{Simulation-based SMC samplers\label{sub:Simulation-based-SMC-samplers}}
Section \ref{sub:Simulation-based-methods} describes how the ABC
and SL approximations may be used within IS. The same approximate
likelihoods may be used in SMC. In ABC \citep{Sisson2007d}, where
the sequence of targets is chosen to be $\pi_{t}(\theta)\propto p(\theta)\widehat{f}_{\epsilon_{t}}(y|\theta)$
with a decreasing sequence $\epsilon_{t}$, this idea provides a useful alternative to MCMC for exploring ABC posterior
distributions, whilst also providing estimates of Bayes' factors \citep{Didelot2011}.
The use of SMC with SL does not appear to have been explored previously.
One might expect SMC to be useful in this context (using, for example,
the sequence of targets $\pi_{t}(\theta)\propto p(\theta)\widehat{f}_{\mbox{SL}}^{(t/T)}\left(S(y)|\theta\right)$),
particularly when $\widehat{f}_{\mbox{SL}}$ is concentrated relative to the prior.

\subsection{Application to precision matrices\label{sub:Application-to-precision}}
In this section we examine the performance of the SMC sampler, with MCMC
proposal and data-tempered target distributions, for estimating the evidence in
an example in which $\theta$ is of moderately high dimension. We consider the case
in which $\theta=\Sigma^{-1}$ is an unknown precision matrix, $f(y|\theta)$ is the
$d$-dimensional multivariate Gaussian distribution with zero mean and $p(\theta)$
is a Wishart distribution $\mathcal{W}(\nu,V)$ with parameters $\nu\geq d$ and $V\in\mathbb{R}^{d\times d}$. Suppose
we observe $n$ \foreignlanguage{english}{i.i.d.} observations \foreignlanguage{english}{$y=\left\{ y_{i}\right\} _{i=1}^{n}$},
where $y_{i}\in\mathbb{R}^{d}$\foreignlanguage{english}{.} The true
evidence can be calculated analytically, and is given by 
\begin{equation}
p(y)=\frac{1}{\pi{}^{nd/2}}\frac{\Gamma_{d}(\frac{\nu+n}{2})}{\Gamma_{d}(\frac{\nu}{2})}\frac{\left|\left(V^{-1}+\sum_{i=1}^{n}y_{i}y_{i}^{T}\right)^{-1}\right|^{\frac{\nu+n}{2}}}{\left|V\right|^{\frac{\nu}{2}}},
\end{equation}
where $\Gamma_{d}$ denotes the $d$-dimensional gamma function. For
ease of implementation, we parametrise the precision using a Cholesky
decomposition $\Sigma^{-1}=LL'$ with $L$ a lower triangular matrix
whose $(i,j)$'th element is denoted $a_{ij}$.

As in section \ref{sub:Toy-example}, we write $f(y|\theta)$ as $\gamma(y|\theta)/Z(\theta)$
as follows
\selectlanguage{english}%
\begin{eqnarray}
f\left(\left\{ y_{i}\right\} _{i=1}^{n}\mid\Sigma^{-1}\right) 
%& = & \prod_{i=1}^{n}\left(2\pi\right)^{-d/2}\left|\Sigma\right|^{-1/2}\exp\left(-\frac{1}{2}y_{i}'\Sigma^{-1}y_{i}\right)\nonumber \\
 & = & \left|2\pi\Sigma\right|^{-n/2}\exp\left(-\frac{1}{2}\sum_{i=1}^{n}y_{i}'\Sigma^{-1}y_{i}\right),\nonumber\label{eq:mvn_llhd}
\end{eqnarray}
where in some of the experiments that follow,
$Z(\theta)=\left|2\pi \Sigma\right|^{n/2}$
is treated as if it is an INC. In the Wishart
prior, we take $\nu=10+d$ and $V=I_{d}$.

Taking $d=10$, $n=30$ points were simulated using $y_{i}\sim\mathcal{MVN}\left(0_{d},0.1\times I_{d}\right)$.
The parameter space is thus 55-dimensional, motivating the use of
an SMC sampler in place of IS or the population exchange method, neither
of which are suited to this problem. In the SMC sampler, in which we
used $P=10,000$ particles, the sequence of targets is given by data
point tempering. Specifically, the sequence of targets is to use $p(\Sigma^{-1})$
when $t=0$ and $p(\Sigma^{-1})f\left(\left\{ y_{i}\right\} _{i=1}^{t}\mid\Sigma^{-1}\right)$
for $t=1,...,T$ (with $T=n$). The parameters are $\left\{ a_{ij}\mid1\leq j\leq i\leq d\right\} $.
We use single component MH kernels to update each of the parameters,
with one (deterministic) sweep consisting of an update of each in
turn. For each $a_{ij}$ we use a Gaussian random walk proposal, where
at target $t$, the variance for the proposal used for $a_{ij}$ is
taken to be the sample variance of $a_{ij}$ at target $t-1$. For
updating the weights of each particle we used equation \ref{eq:SMC_Z_tempering-1},
where we chose $q_{w}(\cdot)=f\left(\cdot\mid\widehat{\Sigma^{-1}}\right)$
with $\widehat{\Sigma^{-1}}$ the maximum likelihood estimate of the
precision $\Sigma^{-1}$, and chose $M=200$ ``internal'' importance
sampling points. Systematic resampling was performed when the effective
sample size (ESS) fell below $P/2$.

We estimated the evidence 10 times using the SMC sampler and compared
the statistical properties of each algorithm using these estimates.
For our simulated data, the $\log$ of the true evidence was $-89.43$.
Over the 10 runs of the SMC sampler a five-number summary of the $\log$
evidence estimates was ($-90.01$, $-89.51$, $-89.35$, $-88.92$, $-88.37$).
% -89.34. A summary is given in table \ref{tab:Summary-statistics-of}.

% \begin{table}
% \begin{centering}
% \begin{tabular}{|c|c|c|c|c|c|}
% \hline 
% Stat. & Min. & 1st Q. & Med. & 3rd Q. & Max.\tabularnewline
% \hline 
% \hline 
% $\log$ ev. & -90.01 & -89.51 & -89.35 & -88.92 & -88.37\tabularnewline
% \hline 
% \end{tabular}
% \par\end{centering}

% \protect\caption{Summary statistics of the $\log$ evidence over 10 runs of the SMC
% sampler.\label{tab:Summary-statistics-of} }

\subsection{Application to Ising models}

In this section we apply the random weight SMC sampler to the Ising model data
considered in section \ref{sub:10by10}. We use SMC to estimate the marginal
likelihood of both the first and second order Ising models, then take the
ratio of these estimates to estimate the Bayes' factor. Note that in this case
the size of the parameter space is much smaller than in the precision example,
with the models having parameter spaces of sizes 1 and 2 respectively. The
excellent results achieved by IS in section \ref{sub:10by10} might seem to imply that SMC samplers are not required for this problem, but recall that we required preliminary runs of the exchange algorithm in order to design an appropriate importance proposal, along with an SMC sampler in order to estimate the normalising constant $Z(\widehat{\theta})$ of the distribution $q_u$ used for the auxiliary variables $u^{(m)}$. An SMC sampler offers a cleaner approach that requires less user tuning.

We applied the random weight SMC sampler described in section
\ref{sub:Random-weight-SMC}, with 500 particles, data point tempering (adding
one pixel at a time, taking $q_w$ to be $\mbox{Bern}(0.5)$), and using the
estimate of the ratio of normalising constants in the weight update from
equation \eqref{eq:SMC_Z_tempering-1} with $M=20$ importance points. Each of
these estimates requires simulating a single point from $\gamma_{t}(\cdot |
\theta_{t-1}^{(p)})$ using a Gibbs sampler, which had a burn in of $B=10$
iterations, yielding a total computational budget of 200 Gibbs sweeps for
estimating the ratio of normalising constants. Note that, as considered in
section \ref{sub:10by10}, this use of a Gibbs sampler results in an inexact
algorithm, but this level of burn in was found to be sufficient for this bias
to be minimal in the random weight IS algorithms. The MCMC kernel of the
exchange algorithm was used (with proposal taken to be the sample variance of the
set of particles at each SMC iteration), using the approximate version where a
Gibbs sampler with burn in $B=10$ iterations is used to simulate from $\gamma_{t}(\cdot
| \theta^{(*)})$. The total cost of this algorithm is comparable to the IS
approaches in section \ref{sub:10by10}, with a total cost of $5.25 \times 10^6$
Gibbs sweeps and hence around a quarter of that of the algorithm of \citet{Friel2013e}. Figure \ref{fig:Box-plots-SMC} shows the results produced
by this method in comparison with those from \citet{Friel2013e}.

\begin{figure}
\includegraphics[scale=0.30]{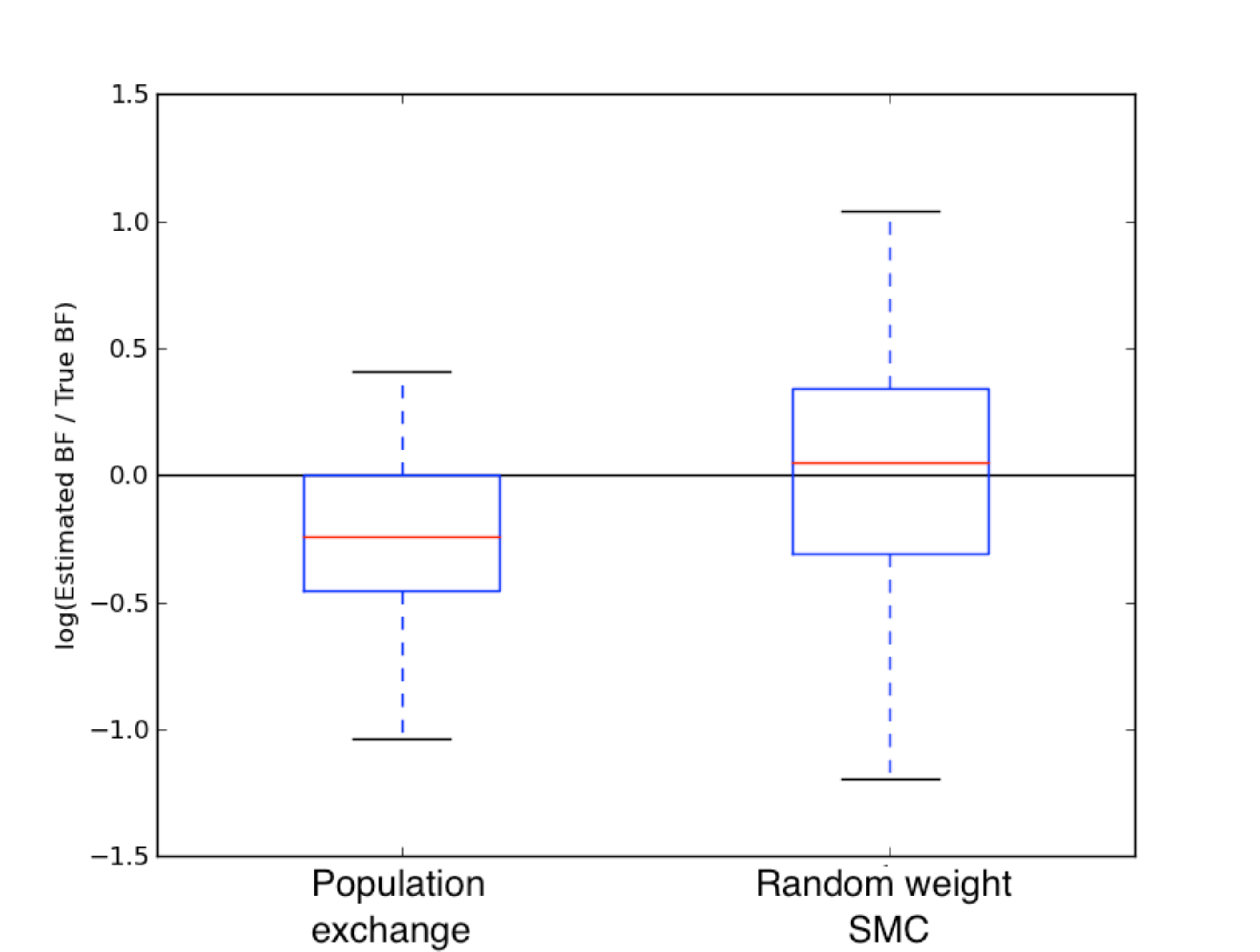}

\protect\caption{Box plots of the results of population exchange and random weight SMC.\label{fig:Box-plots-SMC}}
\end{figure}

We observe that the median of the random weight SMC estimates is more accurate than that of the population exchange estimates - the bias introduced through using an internal Gibbs sampler in place of an exact sampler does not appear to accumulate sufficiently to affect the results (this issue is explored further in the following section). However, it has slightly higher variance than population exchange (much higher than SAVIS and MAVIS). This high variance can be attributed to two factors:

\begin{enumerate}
\item Since the SMC sampler begins with points sampled from the prior, larger changes in $\theta$ are considered than in the IS approaches, thus the estimates of the ratio of the normalising constants require more importance points to be accurate - the results suggest that the budget of 200 Gibbs sweeps is insufficient. This is the opposite situation to that encountered in section \ref{sub:100by100}, where the changes in $\theta$ are small and the estimates of the ratio of the normalising constants are accurate with small numbers of importance points.
\item It's been frequently observed (cf. \citet{Lee2015})  that, as suggested
  by the asymptotic variance expansion, in some instances the first few iterations of an SMC sampler contribute substantially to the ultimate error. This issue arises since the forgetting of the sampler doesn't suppress the terms that the initial errors contribute to the asymptotic variance enough to compensate for the fact that they're much larger than the final ones. This is due, when using data point tempering in the manner we have here, to the much larger relative discrepancy between the first few distributions in the sequence than between later distributions.
\end{enumerate}

We conclude that the random weight SMC method is a viable approach to estimating Bayes' factors for these models, but that care should be taken in tuning the weight estimates and choosing the sequence of SMC distributions.

% \end{table}
\subsection{Biased Weights in SMC}\label{sec:biased-smc}
\subsubsection{Error bounds}
We now examine the effect of using inexact weights on estimates produced
by SMC samplers. By way of theoretical motivation of such an approach,
we demonstrate that under strong, but standard (cf. \citet{DelMoral2004}), assumptions on the mixing of the
sampler, if the approximation error is sufficiently small, then this
error can be controlled uniformly over the iterations of the algorithm
and will \emph{not} accumulate unboundedly over time (and that it
can in principle be made arbitrarily small by making the relative
bias small enough for the desired level of accuracy). We do not here
consider the particle system itself, but rather the sequence of distributions
which are being approximated by Monte Carlo in the approximate version
of the algorithm and in the idealised algorithm being approximated.
The Monte Carlo approximation of this sequence can then be understood
as a simple mean field approximation and its convergence has been
well studied, see for example \citet{DelMoral2004}.

In order to do this, we make a number of identifications in order
to allow the consideration of the approximation in an abstract manner.
We allow $\widetilde{G}_{t}$ to denote the incremental weight function
at time $t$, and $G_{t}$ to denote the \emph{exact} weight function
which it approximates (any auxiliary random variables needed in order
to obtain this approximation are simply added to the state space and
their sampling distribution to the transition kernel). The transition
kernel $M_{t}$ combines the proposal distribution of the SMC algorithm
together with the sampling distribution of any needed auxiliary variables.
We allow $x$ to denote the full collection of variables sampled during
an iteration of the sampler, which is assumed to exist on the same
space during each iteration of the sampler.

We employ the following assumptions (we assume an infinite sequence
of algorithm steps and associated target distributions, proposals
and importance weights; naturally, in practice only a finite number
would be employed but this formalism allows for a straightforward
statement of the result): 
\begin{description}
\item [{A1}] (Bounded Relative Approximation Error) There exists $\gamma<\infty$
such that: 
\[
\sup_{t\in\mathbb{N}}\sup_{x}\frac{|G_{t}(x)-\widetilde{G}_{t}(x)|}{\widetilde{G}_{t}(x)}\leq\gamma.
\]

\item [{A2}] (Strong Mixing; slightly stronger than a global Doeblin condition)
There exists \foreignlanguage{english}{$\epsilon(M)>0$ such that:}
\[
\sup_{t\in\mathbb{{N}}}\inf_{x,y}\frac{dM_{t}(x,\cdot)}{dM_{t}(y,\cdot)}\geq\epsilon(M).
\]

\item [{A3}] (Control of Potential) There exists \foreignlanguage{english}{$\epsilon(G)>0$
such that:} 
\[
\sup_{t\in\mathbb{{N}}}\inf_{x,y}\frac{G_{t}(x)}{G_{t}(y)}\geq\epsilon(G).
\]

\end{description}
The first of these assumptions controls the error introduced by employing
an inexact weighting function; the others ensure that the underlying
dynamic system is sufficiently ergodic to forget it's initial conditions
and hence limit the accumulation of errors. We demonstrate below that
the combination of these properties suffices to transfer that stability
to the approximating system.

We consider the behaviour of the distributions $\eta_{p}$ and $\tilde{\eta}_{p}$
which correspond to the target distributions at iteration $p$ of
the exact and approximating algorithms, prior to reweighting, at iteration
$p$ in the following proposition, the proof of which is provided
in Appendix \ref{sec:ubound_proof}, which demonstrates that if the
approximation error, $\gamma$, is sufficiently small then the accumulation
of error over time is controlled:
\begin{prop}[Uniform Bound on Total-Variation Discrepancy]
\label{th:ubound}If A1, A2 and A3 hold then:
\[
\sup_{n\in\mathbb{N}}\left\Vert \eta_{n}-\widetilde{\eta}_{n}\right\Vert _{\mbox{TV}}\leq\frac{4\gamma(1-\epsilon(M))}{\epsilon^{3}(M)\epsilon(G)}.
\]

\end{prop}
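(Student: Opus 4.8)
The plan is to place the two sequences $(\eta_{n})$ and $(\widetilde{\eta}_{n})$ within the Feynman--Kac formalism of \citet{DelMoral2004} and to treat the inexact flow as a deterministic perturbation of the exact one. Concretely, I would regard $\eta_{n}=\Phi_{n}(\eta_{n-1})$ and $\widetilde{\eta}_{n}=\widetilde{\Phi}_{n}(\widetilde{\eta}_{n-1})$, where the normalised update $\Phi_{n}(\eta)(dy)\propto\int\eta(dx)\,G_{n-1}(x)\,M_{n}(x,dy)$ performs selection according to the exact potential $G_{n-1}$ followed by mutation through $M_{n}$, and $\widetilde{\Phi}_{n}$ is identical except that $G_{n-1}$ is replaced by $\widetilde{G}_{n-1}$. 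Both flows share the same mutation kernels and (as both samplers start from $\pi_{0}$) the same initialisation $\eta_{0}=\widetilde{\eta}_{0}$, so that the \emph{only} source of discrepancy is the substitution of $\widetilde{G}$ for $G$, whose size is controlled in a relative sense by A1.

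The core of the argument is a telescoping decomposition. Writing $\Phi_{p,n}$ for the exact semigroup mapping the law at time $p$ to that at time $n$, I would expand
\[
\eta_{n}-\widetilde{\eta}_{n}=\sum_{p=1}^{n}\Big[\Phi_{p,n}\big(\Phi_{p}(\widetilde{\eta}_{p-1})\big)-\Phi_{p,n}\big(\widetilde{\Phi}_{p}(\widetilde{\eta}_{p-1})\big)\Big],
\]
which isolates at each time $p$ the local error incurred by using $\widetilde{G}_{p-1}$ in place of $G_{p-1}$, and then propagates that single error forward through the \emph{exact} semigroup. Bounding a general term therefore requires two ingredients: (i) a one-step perturbation bound, controlling $\tvnorm{\Phi_{p}(\widetilde{\eta}_{p-1})-\widetilde{\Phi}_{p}(\widetilde{\eta}_{p-1})}$ by a multiple of $\gamma$; and (ii) a geometric contraction estimate for $\Phi_{p,n}$ in total variation.

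For (i) the two updates differ only in their selection step, after which the same mutation $M_{p}$ is applied to both; since $M_{p}$ has Dobrushin coefficient at most $1-\epsilon(M)$ under A2, the mutation can only shrink the discrepancy, and it remains to bound the distance between the two reweighted measures $\Psi_{G}(\widetilde{\eta}_{p-1})$ and $\Psi_{\widetilde{G}}(\widetilde{\eta}_{p-1})$. This is a routine Bayes-rule stability computation: expanding the difference of the two normalised expectations and using $|G-\widetilde{G}|\le\gamma\widetilde{G}$ from A1 (both in the numerator and, through the normalising constants, in the denominator) yields a bound proportional to $\gamma$, so that the full one-step error is of order $\gamma(1-\epsilon(M))$.

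The decisive step is (ii). A single application of $\Phi$ need not contract in total variation, because the selection/reweighting step can \emph{expand} distances; it is only the interplay with the strongly mixing mutation (A2) and the controlled oscillation of the potential (A3, which gives $\inf G/\sup G\ge\epsilon(G)$) that produces a net contraction. Here I would invoke the Dobrushin-coefficient machinery for Feynman--Kac semigroups \citep{DelMoral2004}: comparing the mutations emanating from two \emph{distinct} reweighted measures requires a two-sided minorisation of $M$, which is what introduces the powers of $\epsilon(M)$, while A3 controls the reweighting, yielding a bound of the form $\beta(\Phi_{p,n})\le C\,r^{\,n-p}$ with rate $r<1$ governed by $\epsilon^{2}(M)\epsilon(G)$ and a prefactor $C$ of order $1/\epsilon(M)$. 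This geometric forgetting is precisely what prevents the accumulated error from growing with $n$. Combining (i) and (ii), the telescoping sum is dominated by a convergent geometric series, uniformly in $n$, and collecting the constants $\epsilon(M),\epsilon(G)$ delivers the stated bound $4\gamma(1-\epsilon(M))/\big(\epsilon^{3}(M)\epsilon(G)\big)$. I expect step (ii) --- establishing the geometric contraction with the correct dependence on $\epsilon(M)$ and $\epsilon(G)$, and in particular tracking where the cubic power of $\epsilon(M)$ arises --- to be the main obstacle; once that stability estimate is in hand, the perturbation bound and the summation are essentially mechanical.
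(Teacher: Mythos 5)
Your proposal takes essentially the same route as the paper's proof: the identical telescoping decomposition $\eta_{t}-\widetilde{\eta}_{t}=\sum_{s=1}^{t}\Phi_{s,t}(\Phi_{s}(\widetilde{\eta}_{s-1}))-\Phi_{s,t}(\widetilde{\Phi}_{s}(\widetilde{\eta}_{s-1}))$, a one-step bound combining Bayes-rule stability under A1 (the paper's Lemma~\ref{lemma:bgerror}, giving $2\gamma$) with the Dobrushin coefficient $1-\epsilon(M)$ of the mutation kernel, the semigroup contraction estimate $\beta(\Phi_{s,s+k})=\tfrac{2}{\epsilon(M)\epsilon(G)}(1-\epsilon^{2}(M))^{k}$ of \citet[Proposition~4.3.6]{DelMoral2004}, and summation of the resulting geometric series. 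Your only imprecision is in the constant bookkeeping for step (ii): the contraction rate is $1-\epsilon^{2}(M)$ with $\epsilon(G)$ entering only the prefactor $2/(\epsilon(M)\epsilon(G))$, and the cubic power of $\epsilon(M)$ in the final bound arises from that prefactor multiplied by the factor $1/\epsilon^{2}(M)$ obtained on summing the geometric series.
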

This result is not intended to do any more than demonstrate that,
qualitatively, such forgetting can prevent the accumulation of error
even in systems with ``biased'' importance weighting potentials.
In practice, one would wish to make use of more sophisticated ergodicity
results such as those of \citet{Whiteley2013}, within this framework
to obtain results which are somewhat more broadly applicable: assumptions
A2 and A3 are very strong, and are used only because they allow stability
to be established simply. Although this result is, in isolation, too
weak to justify the use of the approximation schemes introduced here
in practice, together with the empirical results presented below, it does
suggest that further investigation of such approximations is warranted
particularly in settings in which unbiased estimators are not available.

\subsubsection{Empirical results}

We use the precision example introduced in section \ref{sub:Application-to-precision}
to investigate the effect of using biased weights in SMC samplers.
Specifically we take $d=1$ and use a simulated dataset $y$ where
$n=5000$ points were simulated using $y_{i}\sim\mathcal{N}\left(0,0.1\right)$.
In this case there is only a single parameter to estimate, $a_{1}$,
and we examine the bias of estimates of the evidence using four alternative
SMC samplers, each of which use a data-tempered sequence of targets
(adding one data point at each target). For this data we can calculate analytically the true value of the marginal likelihood after receiving each data point, thus we can estimate the bias of each sampler at each iteration. The first SMC sampler (the ``exact weight'' sampler) is the method where
the true value of $Z_{t-1}(\theta_{t-1}^{(p)})/Z_{t}(\theta_{t-1}^{(p)})$ is used in the weight update. The second is the same ``unbiased random weight'' sampler used in section \ref{sub:Application-to-precision},
which uses an unbiased IS weight estimate, here with $M=20$ ``internal''
importance sampling points. The third, which we refer to as the ``biased random weight'' sampler,
uses a biased bridge estimator instead, specifically we use in place
of \eqref{eq:SMC_Z_tempering-1}
\begin{equation}
\widehat{\frac{Z_{t-1}(\theta_{t-1}^{(p)})}{Z_{t}(\theta_{t-1}^{(p)})}}=\left(\sum_{m=1}^{M/2}\left[\frac{\gamma_{t-1}(v_{t,1}^{(m,p)}|\theta_{t-1}^{(p)})q_{w}(w_{t,1}^{(m,p)})}{\gamma_{t}(u_{t,1}^{(m,p)}|\theta_{t-1}^{(p)})}\right]^{1/2}\right)\left/ \right.\nonumber
\end{equation}
\begin{equation}
\qquad \left(\sum_{m=1}^{M/2}\left[\frac{\gamma_{t}(u_{t,2}^{(m,p)}|\theta_{t-1}^{(p)})}{\gamma_{t-1}(v_{t,2}^{(m,p)}|\theta_{t-1}^{(p)})q_{w}(w_{t,2}^{(m,p)})}\right]^{1/2}\right),\label{eq:SMC_Z_tempering}
\end{equation}
where $v_{t,2}^{(m,p)}\sim f_{t-1}(.|\theta_{t-1}^{(p)})$, $w_{t,2}^{(m,p)}\sim q_{w}(.)$
so that $u_{t,2}^{(m,p)}=\left(v_{t,2}^{(m,p)},w_{t,2}^{(m,p)}\right)$,
and $u_{t,1}^{(m,p)}\sim f_{t}(.|\theta_{t-1}^{(p)})$ with $v_{t,1}^{(m,p)}$
and $w_{t,1}^{(m,p)}$ being the corresponding subvectors of $u_{t,1}^{(m,p)}$.

Motivated by the theoretical argument presented previously,
we investigate the effect of improving the mixing of the kernel used
within the SMC. In this model the exact posterior is available at
each SMC target, so we may replace the use of an MCMC move to update
the parameter with a direct simulation from the posterior. In this
extreme case, there is no dependence between each particle and its
history; we refer to this, the fourth SMC sampler we consider, as ``biased random weight with perfect mixing''.  Each SMC sampler was run 20 times, using
50 particles.

Figures \ref{fig:smc_bias} and \ref{fig:smc_mse} show the estimated bias and mean square error of the $\log$ evidence estimates of each sampler at each iteration\footnote{We note that log of an unbiased estimate in fact produces a negatively-biased estimator but we observe, through the results for the exact algorithm indicate that the variance of the evidence estimates we use is sufficiently small that this effect is negligible.}. No bias is observed in the algorithm with true weights, and only a small bias is observed in the unbiased random weight sampler (this bias is likely to be due to the relatively small number of replications). Bias does accumulate in the biased random weight sampler, but we note that the level of bias appears to stabilise. This accumulation of bias means that one should exercise caution in the use of SMC samplers with biased weights. However, we observe that perfect mixing substantially decreases the bias in the evidence estimates from the algorithm. Also, in this case we observe that the bias does not accumulate sufficiently to give poor estimates of the evidence. Here the standard deviation of the final $\log$ evidence estimate over the random weight SMC sampler runs is approximately 0.4, so the bias is not large by comparison. 

\begin{figure}
\centering{}\includegraphics[scale=0.45]{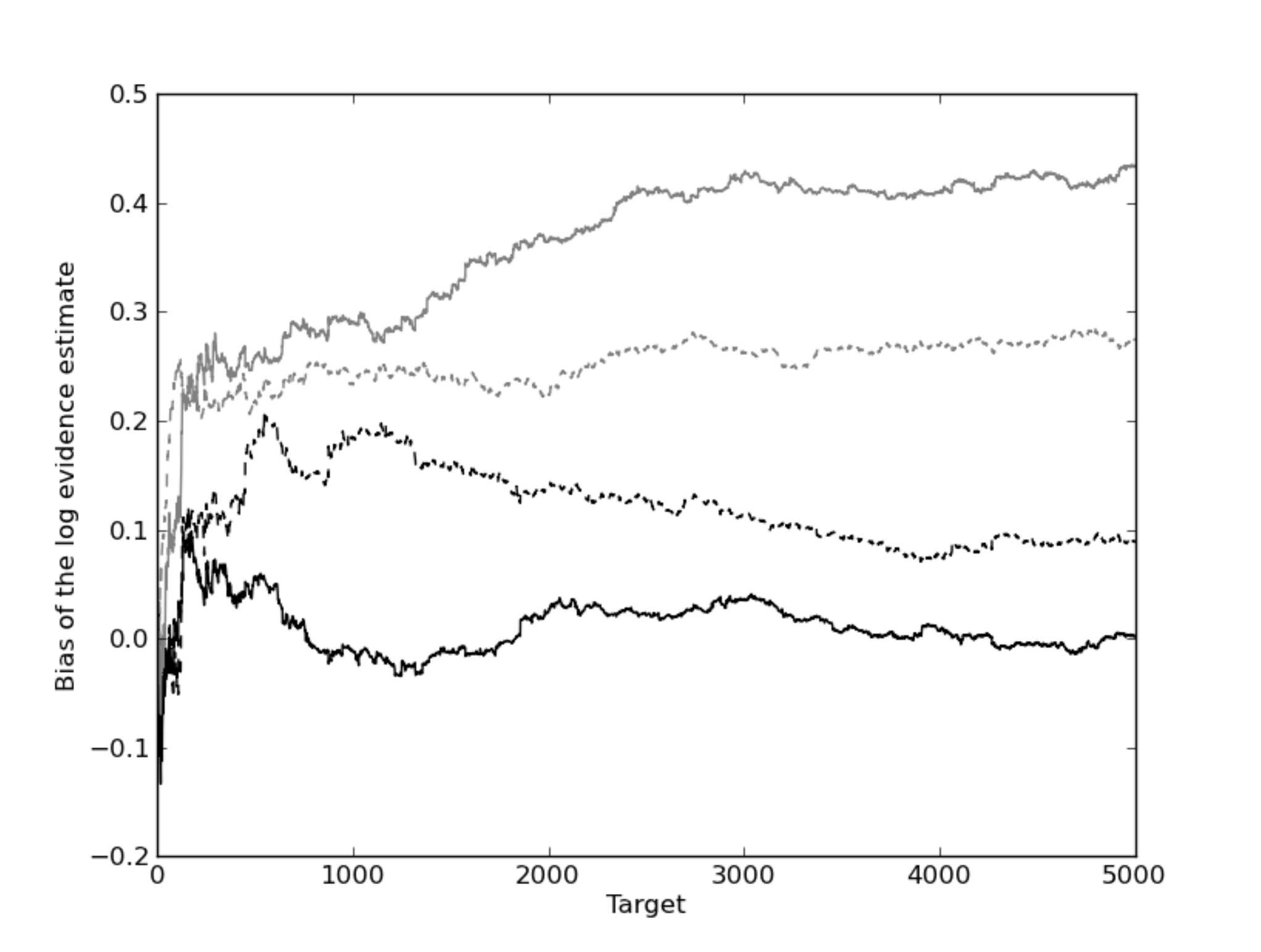}\protect\caption{The estimated bias in the $\log$ evidence estimates of the true (black solid), unbiased random weight (black dashed), biased random weight (grey solid) SMC algorithms using MCMC kernels, and the estimated bias when using the biased random weight algorithm with perfect mixing (grey dashed).
\label{fig:smc_bias}}
\end{figure}

\begin{figure}
\centering{}\includegraphics[scale=0.45]{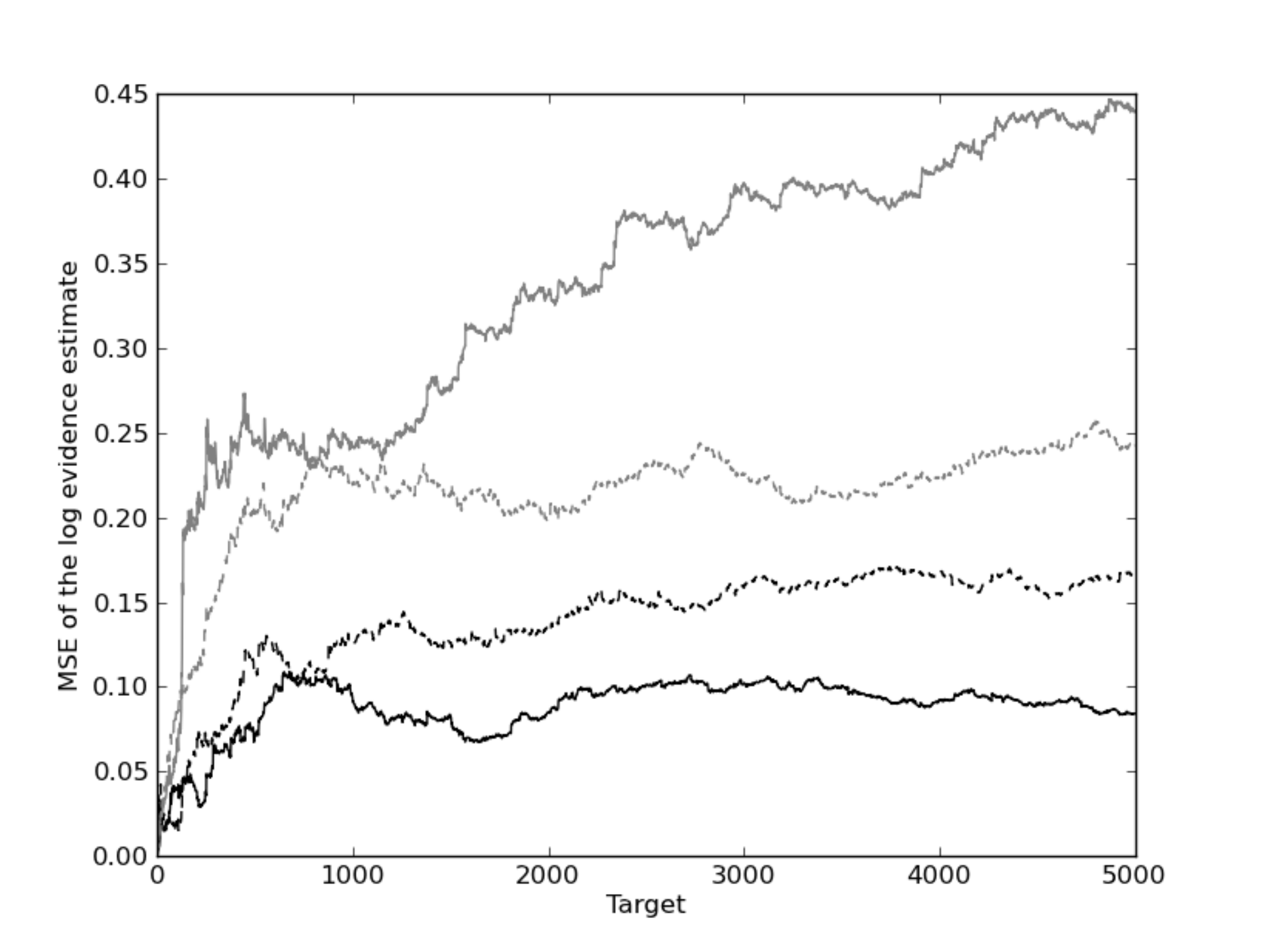}\protect\caption{The estimated MSE in the $\log$ evidence estimates of the four SMC samplers (same key as figure \ref{fig:smc_bias}).
\label{fig:smc_mse}}
\end{figure}

\subsection{Discussion}

In section \ref{sub:Application-to-Ising} we observed clearly that
the use of biased weights in IS can be useful for estimating the evidence
in doubly intractable models, but we have not observed the same for
SMC with biased weights. When applied to the precision example in
section \ref{sub:Application-to-precision}, an inexact sampler (using
the bridge estimator) did not outperform the exact sampler, despite
the mean square error of the biased bridge weight estimates being
substantially improved compared to the unbiased IS estimate. Over
10 runs the mean square error in the $\log$ evidence was 0.34 for
the inexact sampler, compared to 0.28 for the exact sampler. This
experience suggests that samplers with biased weights should be used
with caution: weight estimates with low variance do not guarantee
good performance due to the accumulation of bias in the SMC.

However, the theoretical and empirical investigation in this section
suggests that this idea is worth further investigation, possibly for
situations involving some of the other intractable likelihoods listed
in section \ref{sec:Introduction}. Our results suggest that improved
mixing can help combat the accumulation of bias, which may imply
that there may be situations where it is useful to perform many iterations
of a kernel at a particular target, rather than the more standard
approach of using many intermediate targets at each of which a single
iteration of a kernel is used. Other variations are also possible,
such as the calculation of fast cheap biased weights at each target
in order only to adaptively decide when to resample, with more accurate
weight estimates (to ensure accurate resampling and accurate estimates
based on the particles) only calculated when the method chooses to
resample.

\section{Conclusions\label{sec:Conclusions}}

This paper describes several IS and SMC approaches for estimating
the evidence in models with INCs that outperform previously described
approaches. These methods may also prove to be useful alternatives
to MCMC for parameter estimation. Several of the ideas in the paper
are also applicable more generally, in particular the use of synthetic
likelihood in the IS context and the notion of using biased weight
estimates in IS and SMC. We note that the bias in these biased weight methods may be small compared to errors resulting from commonly accepted approximate techniques such as ABC.

For biased IS, in section \ref{sub:IS-with-biased} we show that the error of
estimates from low-variance biased methods can be less than those from
unbiased methods of higher variance. This is comparable to a result for biased
MCMC methods \citep{Johndrow2015}, where it is shown that the error of
estimates from a computationally cheap biased MCMC can be less than those from
an expensive exact MCMC. In both cases, given a finite computational budget,
it is not always the case that this budget should be spent on guaranteeing the
exactness of the sampler if minimizing approximation error is the objective. 

A similar choice concerning the allocation of computational resources arises in
SMC. Here, one does need to be especially careful about the use of biased
SMC, due to the possible accumulation of bias over SMC iterations. One
might expect this accumulated bias to outweigh any benefits a reduced variance
may bring. For this reason we advise caution in the use of biased SMC in
general. This paper does, however, indicate that there may exist cases where
biased SMC is useful, through: the theoretical result that under strong mixing
conditions bias does not accumulate unboundedly; the empirical evidence that
fast mixing may reduce the accumulation of bias; and the empirical results
where we observe (in a situation where the distance between successive targets
decreases) that the rate at which bias accumulates decreases with time. 

\vspace{10 mm}

\renewcommand{\abstractname}{Acknowledgements}
\begin{abstract}
The authors would like to thank Nial Friel for useful discussions, and for giving us access to the data and results from \citet{Friel2013e}.
\end{abstract}

%%\bibliographystyle{mychicago}
%\bibliographystyle{spbasic}      % basic style, author-year citations
%%\bibliographystyle{spmpsci}      % mathematics and physical sciences
%%\bibliographystyle{spphys}       % APS-like style for physics
%\bibliography{subset}

\appendix
%dummy comment inserted by tex2lyx to ensure that this paragraph is not empty

\section{Using SAV and exchange MCMC within SMC\label{sec:Using-SAV-and}}

\subsection{Weight update when using SAV-MCMC}

\selectlanguage{english}%
Let us consider the SAVM posterior, with $K$ being the MCMC move
used in SAVM. In this case the weight update is 
\begin{eqnarray*}
\widetilde{w}_{k}^{(p)} & = & \frac{p(\theta_{t}^{(p)})f_{t}(y|\theta_{t}^{(p)})q_{u}(u_{t}^{(p)}|\theta_{t}^{(p)},y)}{p(\theta_{t-1}^{(p)})f_{t-1}(y|\theta_{t-1}^{(p)})q_{u}(u_{t-1}^{(p)}|\theta_{t-1}^{(p)},y)}\\
 & & \qquad \frac{L_{t-1}((\theta_{t}^{(p)},u_{t}^{(p)}),(\theta_{t-1}^{(p)},u_{t-1}^{(p)}))}{K_{t}((\theta_{t-1}^{(p)},u_{t-1}^{(p)}),(\theta_{t}^{(p)},u_{t}^{(p)}))}\\
 & = & \frac{p(\theta_{t}^{(p)})f_{t}(y|\theta_{t}^{(p)})q_{u}(u_{t}^{(p)}|\theta_{t}^{(p)},y)}{p(\theta_{t-1}^{(p)})f_{t-1}(y|\theta_{t-1}^{(p)})q_{u}(u_{t-1}^{(p)}|\theta_{t-1}^{(p)},y)}\\
 & & \qquad \frac{p(\theta_{t-1}^{(p)})f_{t}(y|\theta_{t-1}^{(p)})q_{u}(u_{t-1}^{(p)}|\theta_{t-1}^{(p)},y)}{p(\theta_{t}^{(p)})f_{t}(y|\theta_{t}^{(p)})q_{u}(u_{t}^{(p)}|\theta_{t}^{(p)},y)}\\
 & = & \frac{\gamma_{t}(y|\theta_{t-1}^{(p)})}{\gamma_{t-1}(y|\theta_{t-1}^{(p)})}\frac{Z_{t-1}(\theta_{t-1}^{(p)})}{Z_{t}(\theta_{t-1}^{(p)})},
\end{eqnarray*}
which is the same update as if we could use MCMC directly.

\selectlanguage{british}%

\subsection{Weight update when using the exchange algorithm}

\citet{Nicholls2012} show the exchange algorithm, when set up to
target $\pi_{t}(\theta|y)\propto p(\theta)f_{t}(y|\theta)$ in the
manner described in section \ref{sub:Exchange-algorithms}, simulates
a transition kernel that is in detailed balance with $\pi_{t}(\theta|y)$.
This follows from showing that it satisfies a ``very detailed balance''
condition, which takes account of the auxiliary variable $u$. The
result is that the derivation of the weight update follows exactly
that of \eqref{eq:smc-mcmc}.

\section{An extended space construction for the random weight SMC method in
section \ref{sub:SMC-with-an} \label{sec:An-Extended-Space}}

The following extended space construction justifies the use of the
``approximate'' weights in \eqref{eq:smc_unbiased_Z} via
an explicit sequential importance (re)sampling argument along the
lines of \citet{DelMoral2006c}, albeit with a slightly different
sequence of target distributions.

Consider an actual sequence of target distributions $\{\pi_{t}\}_{t\geq0}$.
Assume we seek to approximate a normalising constant during every
iteration by introducing additional variables $u_{t}=(u_{t}^{1},\ldots,u_{t}^{M})$
during iteration $t>0$.

Define the sequence of target distributions: 
\begin{align*}
 & \widetilde{\pi}_{t}\left(\widetilde{x}_{t}=(\theta_{0},\theta_{1},u_{1},\ldots,\theta_{t},u_{t})\right)\\
:= & \pi_{t}(\theta_{t})\prod_{s=0}^{t-1}L_{s}(\theta_{s+1},\theta_{s}) \cdot\\
& \phantom{ \pi_{t}(\theta_{t})} \prod_{s=1}^{t}\frac{1}{M}\sum_{m=1}^{M}\left[f_{s-1}(u_{s}^{m}|\theta_{s-1})\prod_{q\neq m}f_{s}(u_{s}^{m}|\theta_{s-1})\right]
\end{align*}
 where $L_{s}$ has the same r{\^o}le and interpretation as it does
in a standard SMC sampler.

Assume that at iteration $t$ the auxiliary variables $u_{t}^{m}$
are sampled independently (conditional upon the associated value of
the parameter, $\theta_{t-1}$)and identically according to $f_{t}(\cdot|\theta_{t-1})$
and that $K_{t}$ denotes the incremental proposal distribution at
iteration $t$, just as in a standard SMC sampler.

In the absence of resampling, each particle has been sampled from
the following proposal distribution at time $t$: 
\begin{align*}
\widetilde{\mu}_{t}(\widetilde{x}_{t})= & \mu_{0}(\theta_{0})\prod_{s=1}^{t}K_{s}(\theta_{s-1},\theta_{s})\prod_{s=1}^{t}\prod_{m=1}^{M}f_{s}(u_{s}^{m}|\theta_{s-1})
\end{align*}
and hence its importance weight, $W_{t}(\widetilde{x}_{t})$, should be: 
\begin{align*}
& \frac{\pi_{t}(\theta_{t})\prod_{s=0}^{t-1}L_{s}(\theta_{s+1},\theta_{s})}{\mu_{0}(\theta_{0})\prod_{s=1}^{t}K_{s}(\theta_{s-1},\theta_{s})} \\
& \frac{\prod_{s=1}^{t}\frac{1}{M}\sum_{m=1}^{M}\left[f_{s-1}(u_{s}^{m}|\theta_{s-1})\prod_{q\neq m}f_{s}(u_{s}^{m}|\theta_{s-1})\right]}{\prod_{s=1}^{t}\prod_{m=1}^{M}f_{s}(u_{s}^{m}|\theta_{s-1})}\\
= & \frac{\pi_{t}(\theta_{t})\prod_{s=0}^{t-1}L_{s}(\theta_{s+1},\theta_{s})}{\mu_{0}(\theta_{0})\prod_{s=1}^{t}K_{s}(\theta_{s-1},\theta_{s})}\prod_{s=1}^{t}\frac{1}{M}\sum_{m=1}^{M}\frac{f_{s-1}(u_{s}^{m}|\theta_{s-1})}{f_{s}(u_{s}^{m}|\theta_{s-1})}\\
= & W_{t-1}(\widetilde{x}_{t-1})\cdot\frac{\pi_{t}(\theta_{t})L_{t-1}(\theta_{t},\theta_{t-1})}{\pi_{t-1}(\theta_{t-1})K_{t}(\theta_{t-1},\theta_{t})}\\
 & \qquad \frac{1}{M}\sum_{m=1}^{M}\frac{f_{t-1}(u_{t}^{m},\theta_{t-1})}{f_{t}(u_{t}^{m}|\theta_{t-1})},
\end{align*}
which yields the natural sequential importance sampling interpretation.
The validity of the incorporation of resampling follows by standard
arguments.

If one has that $\pi_{t}(\theta_{t})\propto p(\theta_{t})f_{t}(y|\theta_{t})=p(\theta_{t})\gamma_{t}(y|\theta_{t})/Z_{t}(\theta_{t})$
and employs the time reversal of $K_{t}$ for $L_{t-1}$ then one
arrives at an incremental importance weight, at time $t$ of:
\begin{eqnarray*}
\frac{p(\theta_{t})f_{t}(y|\theta_{t-1})}{p(\theta_{t-1})f_{t-1}(y|\theta_{t-1})}\frac{1}{M}\sum_{m=1}^{M}\frac{f_{t-1}(u_{t}^{m}|\theta_{t-1})}{f_{t}(u_{t}^{m}|\theta_{t-1})}=\\
\qquad \frac{p(\theta_{t})\gamma_{t}(y|\theta_{t-1})}{p(\theta_{t-1})\gamma_{t-1}(y|\theta_{t-1})}\frac{1}{M}\sum_{m=1}^{M}\frac{\gamma_{t-1}(u_{t}^{m}|\theta_{t-1})}{\gamma_{t}(u_{t}^{m}|\theta_{t-1})}
\end{eqnarray*}
yielding the algorithm described in section \ref{sub:SMC-with-an}
as an exact SMC algorithm on the described extended space.

\section{Proof of SMC Sampler Error Bound\label{sec:ubound_proof}}

A little notation is required. We allow $(E,\mathcal{E})$ to denote
the common state space of the sampler during each iteration, $\mathcal{C}_{b}(E)$
the collection of continuous, bounded functions from $E$ to $\mathbb{R}$,
and $\mathcal{P}(E)$ the collection of probability measures on this
space. We define the Boltzmann-Gibbs operator associated with a potential
function $G:E\to(0,\infty)$ as a mapping, $\Psi_{G}:\mathcal{P}(E)\rightarrow\mathcal{P}(E)$,
weakly via the integrals of any function $\varphi\in\mathcal{C}_{b}(E)$
\[
\int\varphi(x)\Psi_{G}(\eta)(dx)=\frac{\int\eta(dx)G(x)\varphi(x)}{\int\eta(dx^{\prime})G(x^{\prime})}.
\]

The integral of a set $A$ under a probability measure $\eta$ is
written $\eta(A)$ and the expectation of a function $\varphi$ of
$X\sim\eta$ is written $\eta(\varphi)$. The supremum norm on $\mathcal{C}_{b}(E)$
is defined $||\varphi||_{\infty}=\sup_{x\in E}\varphi(x)$ and the
total variation distance on $\mathcal{P}(E)$ is $||\mu-\nu||_{\text{TV}}=\sup_{A}(\nu(A)-\mu(A))$.
Markov kernels, $M:E\rightarrow\mathcal{P}(E)$ induce two operators,
one on integrable functions and the other on (probability) measures:
\begin{align*}
\forall\varphi\in & \mathcal{C}_{b}(E): & M(\varphi)(\cdot):= & \int M(\cdot,dy)\varphi(y)\\
\forall\mu\in & \mathcal{P}(E): & (\mu M)(\cdot):= & \int\mu(dx)M(x,\cdot).
\end{align*}

Having established this notation, we note that we have the following
recursive definition of the distributions we consider: 
\begin{align*}
\widetilde{\eta}_{0}= & \eta_{0}=:M_{0} & {\eta}_{t\geq1}= & \Psi_{G_{t-1}}(\eta_{t-1}) & \widetilde{\eta}_{t\geq1}= & \Psi_{\widetilde{G}_{t-1}}(\widetilde{\eta}_{t-1})
\end{align*}
and for notational convenience define the transition operators as
\begin{align*}
\Phi_{t}(\eta_{t-1})= & \Psi_{G_{t-1}}(\eta_{t-1})M_{t} & \widetilde{\Phi}_{t}(\widetilde{\eta}_{t-1})= & \Psi_{\widetilde{G}_{t-1}}(\widetilde{\eta}_{t-1})M_{t}.
\end{align*}
We make use of the (nonlinear) dynamic semigroupoid, which we define
recursively, via it's action on a generic probability measure $\eta$,
for $t\in\mathbb{N}$: 
\begin{align*}
\Phi_{t-1,t}(\eta)= & \Phi_{t}(\eta) & \Phi_{s,t}=\Phi_{t}(\Phi_{s,t-1}(\eta))\text{ for }s<t,
\end{align*}
with $\Phi_{t,t}(\eta)= \eta$ and $\widetilde{\Phi}_{s,t}$ defined correspondingly.

We begin with a lemma which allows us to control the discrepancy introduced
by Bayesian updating of a measure with two different likelihood functions.
\begin{lemma}[Approximation Error]\label{lemma:bgerror}
If A1. holds, then $\forall\eta\in\mathcal{P}(E)$ and any $t\in\mathbb{N}$:
\[
||\Psi_{\widetilde{G}_{t}}(\eta)-\Psi_{G_{t}}(\eta)||_{TV}\leq2\gamma.
\]
\end{lemma}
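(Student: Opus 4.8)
The plan is to reduce the total-variation discrepancy to a supremum over test functions and to exploit the fact that A1 controls the \emph{relative} error of the potential. For any measurable set $A$ (equivalently, any $\varphi$ with $||\varphi||_{\infty}\le 1$), the definition of the Boltzmann--Gibbs operator gives
\[
\Psi_{\widetilde{G}_{t}}(\eta)(\varphi)-\Psi_{G_{t}}(\eta)(\varphi)=\frac{\eta(\widetilde{G}_{t}\varphi)}{\eta(\widetilde{G}_{t})}-\frac{\eta(G_{t}\varphi)}{\eta(G_{t})}.
\]
The first step is to massage this into a form in which only the discrepancy $\widetilde{G}_{t}-G_{t}$ appears. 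Adding and subtracting $\eta(G_{t}\varphi)/\eta(\widetilde{G}_{t})$ and recognising $\eta(G_{t}\varphi)/\eta(G_{t})=\Psi_{G_{t}}(\eta)(\varphi)$, I would arrive at the centred identity
\[
\Psi_{\widetilde{G}_{t}}(\eta)(\varphi)-\Psi_{G_{t}}(\eta)(\varphi)=\frac{1}{\eta(\widetilde{G}_{t})}\,\eta\bigl[(\widetilde{G}_{t}-G_{t})(\varphi-\Psi_{G_{t}}(\eta)(\varphi))\bigr].
\]
Subtracting the mean $\Psi_{G_{t}}(\eta)(\varphi)$ is the crucial manipulation: it makes the integrand vanish wherever the two potentials agree, so that the entire error is carried by $\widetilde{G}_{t}-G_{t}$.

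The second step is to bound the right-hand side using the assumptions. Assumption A1 supplies the pointwise estimate $|\widetilde{G}_{t}(x)-G_{t}(x)|\le\gamma\,\widetilde{G}_{t}(x)$, while $|\varphi-\Psi_{G_{t}}(\eta)(\varphi)|\le 2||\varphi||_{\infty}\le 2$ for any admissible test function. Inserting these two bounds, the integral is dominated by $2\gamma\,\eta(\widetilde{G}_{t})$, and the prefactor $1/\eta(\widetilde{G}_{t})$ cancels it exactly, leaving $2\gamma$ irrespective of $\eta$. Taking the supremum over $A$ (or over $\varphi$) then gives $||\Psi_{\widetilde{G}_{t}}(\eta)-\Psi_{G_{t}}(\eta)||_{\text{TV}}\le 2\gamma$, as required.

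I expect the only real subtlety, rather than a genuine obstacle, to be ensuring that the bound is uniform in $\eta$, and this is exactly where the \emph{relative} form of A1 is essential. Because $|\widetilde{G}_{t}-G_{t}|$ is controlled by a multiple of $\widetilde{G}_{t}$ itself, the quantity $\eta(|\widetilde{G}_{t}-G_{t}|)$ is controlled by $\gamma\,\eta(\widetilde{G}_{t})$, which is precisely the normalising factor in the denominator; had A1 instead bounded an absolute error, this cancellation would fail and the bound would degrade for measures $\eta$ placing little mass where $\widetilde{G}_{t}$ is large. I would also record, in passing, that $G_{t},\widetilde{G}_{t}>0$ --- so that $\eta(\widetilde{G}_{t})>0$ and the operators are well defined --- which follows from $G_{t}$ taking values in $(0,\infty)$ together with A1.
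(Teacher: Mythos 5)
Your proof is correct and takes essentially the same route as the paper's: your centred identity $\eta\bigl[(\widetilde{G}_{t}-G_{t})(\varphi-\Psi_{G_{t}}(\eta)(\varphi))\bigr]/\eta(\widetilde{G}_{t})$ is exactly the paper's expression $\bigl[\eta(G_{t})\eta(\Delta_{t}\varphi)-\eta(\Delta_{t})\eta(G_{t}\varphi)\bigr]/\bigl[\eta(\widetilde{G}_{t})\eta(G_{t})\bigr]$ with the centering absorbed into the integrand, and both arguments hinge on the same point you highlight, namely that the relative form of A1 gives $\eta(|\widetilde{G}_{t}-G_{t}|)\leq\gamma\,\eta(\widetilde{G}_{t})$ so that the normalising denominator cancels, yielding $2\gamma$. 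The only cosmetic difference is that the paper splits the centred expression into two terms via the triangle inequality, bounding each by $\gamma\left\Vert\varphi\right\Vert_{\infty}$, whereas you bound the single integrand by $2\gamma\left\Vert\varphi\right\Vert_{\infty}$ directly.
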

\begin{proof}
Let $\Delta_{t}:=\widetilde{G}_{t}-G_{t}$ and consider a generic
$\varphi\in\mathcal{C}_{b}(E)$:
\begin{align*}
& (\Psi_{\widetilde{G}_{t}}(\eta)-\Psi_{G_{t}}(\eta))(\varphi)\\%= & \frac{\eta(\widetilde{G}_{t}\varphi)}{\eta(\widetilde{G}_{t})}-\frac{\eta(G_{t}\varphi)}{\eta(G_{t})}\\
= & \frac{\eta(G_{t})\eta(\widetilde{G}_{t}\varphi)-\eta(\widetilde{G}_{t})\eta(G_{t}\varphi)}{\eta(\widetilde{G}_{t})\eta(G_{t})}\\
= & \frac{\eta(G_{t})\eta((G_{t}+\Delta_{t})\varphi)-\eta((G_{t}+\Delta_{t}))\eta(G_{t}\varphi)}{\eta(\widetilde{G}_{t})\eta(G_{t})}\\
= & \frac{\eta(G_{t})\eta(\Delta_{t}\varphi)-\eta(\Delta_{t})\eta(G_{t}\varphi)}{\eta(\widetilde{G}_{t})\eta(G_{t})}
\end{align*}
\par Considering the absolute value of this discrepancy, making using
of the triangle inequality: 
\begin{align*}
  \left|(\Psi_{\widetilde{G}_{t}}(\eta)-\Psi_{G_{t}}(\eta))(\varphi)\right|
%= & \left\vert \frac{\eta(G_{t})\eta(\Delta_{t}\varphi)-\eta(\Delta_{t})\eta(G_{t}\varphi)}{\eta(\widetilde{G}_{t})\eta(G_{t})}\right\vert \\
\leq & \left\vert \frac{\eta(\Delta_{t}\varphi)}{\eta(\widetilde{G}_{t})}\right\vert +\left\vert \frac{\eta(\Delta_{t})}{\eta(\widetilde{G}_{t})}\right\vert \left\vert \frac{\eta(G_{t}\varphi)}{\eta(G_{t})}\right\vert 
\end{align*}
Noting that $G_{t}$ is strictly positive, we can bound $|\eta(G_{t}\varphi)|/\eta(G_{t})$
with $\eta(G_{t}|\varphi|)/\eta(G_{t})$ and thus with $\left\Vert \varphi\right\Vert _{\infty}$
and apply a similar strategy to the first term: 
\begin{align*}
  & \left|(\Psi_{\widetilde{G}_{t}}(\eta)-\Psi_{G_{t}}(\eta))(\varphi)\right\vert\\
\leq & \left\vert \frac{\eta(|\Delta_{t}|)\left\Vert \varphi\right\Vert _{\infty}}{\eta(\widetilde{G}_{t})}\right\vert +\left\vert \frac{\eta(\Delta_{t})}{\eta(\widetilde{G}_{t})}\right\vert \left\vert \frac{\eta(G_{t}|\varphi|)}{\eta(G_{t})}\right\vert 
%\leq & \gamma\left\Vert \varphi\right\Vert _{\infty}+\gamma\left\Vert \varphi\right\Vert _{\infty}=
\leq 2\gamma\left\Vert \varphi\right\Vert_{\infty}.
\end{align*}
(noting that $\eta(|\Delta_{t}|)/\eta(\widetilde{G_{t}})<\gamma$
by integration of both sides of A1).
\end{proof}
We now demonstrate that, if the local approximation error at each
iteration of the algorithm(characterised by $\gamma$) is sufficiently
small then it does not accumulate unboundedly as the algorithm progresses.

%\paragraph{Proof of Proposition \ref{th:ubound}}

\begin{proof}[Proof of Proposition \ref{th:ubound}] We begin with a telescopic decomposition (mirroring
the strategy employed for analysing particle approximations of these
systems in \citet{DelMoral2004}): 
\begin{align*}
\eta_{t}-\widetilde{\eta}_{t}= & \sum_{s=1}^{t}\Phi_{s-1,t}(\widetilde{\eta}_{s-1})-\Phi_{s,t}(\widetilde{\eta}_{s}).
\end{align*}
%where we define $\widetilde{\eta}_{0}$ arbitrarily, setting $M_{1}(x,\cdot)=\eta_{1}(\cdot)$
%and $G_{1}(x)=1$ (this is no more than a notational convenience)
%and set $\Phi_{n,n}=\textsf{Id}$.

We thus establish (noting that $\widetilde{\eta}_{0}=\eta_{0}$):
\begin{align}
\eta_{t}-\widetilde{\eta}_{t}= & \sum_{s=1}^{t}\Phi_{s,t}(\Phi_{s}(\widetilde{\eta}_{s-1}))-\Phi_{s,t}(\widetilde{\Phi}_{s}(\widetilde{\eta}_{s-1})).\label{eq:telescoping_decomp}
\end{align}
\par Turning our attention to an individual term in this expansion,
noting that: 
\begin{align*}
\Phi_{s}(\eta)(\varphi)= & \Psi_{G_{s-1}}(\eta)M_{s}(\varphi) & \widetilde{\Phi}_{s}(\eta)(\varphi)= & \Psi_{\widetilde{G}_{s-1}}(\eta)M_{s}(\varphi)
\end{align*}
we have, by application of a standard Dobrushin contraction argument
and Lemma \ref{lemma:bgerror} 
\begin{align}
& (\Phi_{s}(\widetilde{\eta}_{s-1})-\widetilde{\Phi}_{s}(\widetilde{\eta}_{s-1}))(\varphi)\\
= & \Psi_{G_{s-1}}(\widetilde{\eta}_{s-1})M_{s}(\varphi)-\Psi_{\widetilde{G}_{s-1}}(\widetilde{\eta}_{s-1})M_{s}(\varphi)\nonumber \\
& \left\Vert \Phi_{s}(\widetilde{\eta}_{s-1})-\widetilde{\Phi}_{s}(\widetilde{\eta}_{s-1})\right\Vert _{\mbox{TV}}\\\leq & (1-\epsilon(M))\left\Vert \Psi_{G_{s-1}}(\widetilde{\eta}_{s-1})-\Psi_{\widetilde{G}_{s-1}}(\widetilde{\eta}_{s-1})\right\Vert _{\mbox{TV}}\nonumber \\
\leq & 2\gamma(1-\epsilon(M))\label{eq:local_bound}
\end{align}
which controls the error introduced instantaneously during each step.\par We
now turn our attention to controlling the accumulation of error. We
make use of \citep[Proposition 4.3.6]{DelMoral2004} which, under
assumptions A2 and A3, allows us to deduce that for any probability
measures $\mu,\nu$: 
\begin{align*}
\left\Vert \Phi_{s,s+k}(\mu)-\Phi_{s,s+k}(\nu)\right\Vert _{\mbox{TV}}\leq\beta(\Phi_{s,s+k})\left\Vert \mu-\nu\right\Vert _{\mbox{TV}}
\end{align*}
where 
\begin{align*}
\beta(\Phi_{s,s+k})=\frac{2}{\epsilon(M)\epsilon(G)}(1-\epsilon^{2}(M))^{k}.
\end{align*}
\par Returning to decomposition (\eqref{eq:telescoping_decomp}), applying
the triangle inequality and this result, before finally inserting
(\eqref{eq:local_bound}) we arrive at: 
\begin{align*}
\left\Vert \eta_{t}-\widetilde{\eta}_{t}\right\Vert _{\mbox{TV}}\leq & \sum_{s=1}^{t}\left\Vert \Phi_{s,t}(\Phi_{s}(\widetilde{\eta}_{s-1}))-\Phi_{s,t}(\widetilde{\Phi}_{s}(\widetilde{\eta}_{s-1}))\right\Vert _{\mbox{TV}}\\
\leq & \sum_{s=1}^{t}\frac{2(1-\epsilon^{2}(M))^{t-s}}{\epsilon(M)\epsilon(G)}\left\Vert \Phi_{s}(\widetilde{\eta}_{s-1})-\widetilde{\Phi}_{s}(\widetilde{\eta}_{s-1})\right\Vert _{\mbox{TV}}\\
\leq & \sum_{s=1}^{t}\frac{2(1-\epsilon^{2}(M))^{t-s}}{\epsilon(M)\epsilon(G)}\cdot2\gamma(1-\epsilon(M))\\
= & \frac{4\gamma(1-\epsilon(M))}{\epsilon(M)\epsilon(G)}\sum_{s=1}^{t}(1-\epsilon^{2}(M))^{t-s}
\end{align*}
This is trivially bounded over all $t$ by the geometric series and
a little rearrangement yields the result: 
\begin{align*}
\frac{4\gamma(1-\epsilon(M))}{\epsilon(M)\epsilon(G)}\sum_{s=0}^{\infty}(1-\epsilon^{2}(M))^{s} %= & \frac{4\gamma(1-\epsilon(M))}{\epsilon(M)\epsilon(G)}\frac{1}{1-(1-\epsilon^{2}(M))}\\
= & \frac{4\gamma(1-\epsilon(M))}{\epsilon^{3}(M)\epsilon(G)}.
\end{align*}
\end{proof}

\section{Pseudo code for random weight SMC sampler \label{sub:smc_pseudo}}

This appendix contains the simplest form of the random weight SMC sampler used in the data point tempering examples in section \ref{sec:Sequential-Monte-Carlo}, in which resampling is performed at every step. Essentially, any standard improvements to SMC algorithms can be applied.

\begin{algorithm}
\caption{Random weight SMC sampler with MCMC move and data point tempering}
\label{smc_mcmc_code}
\begin{algorithmic} 

\For {$p=1$ to $P$}
\State Draw $\theta^{(p)}_{0} \sim p(\cdot)$
\For {$m=1$ to $M$}
\State $u_{1}^{m,p} \sim f_1(\cdot | \theta^{(p)}_{0})$
\EndFor
\State Find the estimate $\widehat{\frac{1}{Z_{1}(\theta_{0}^{(p)})}}$ using \eqref{eq:SMC_Z_tempering_step1}
\State Find incremental weight $\widetilde{w}_{1}^{(p)} = \gamma_1 (y|\theta^{(p)}_0) \widehat{\frac{1}{Z_{1}(\theta_{0}^{(p)})}}$
\EndFor

\State Resample the set of particles and set $w^{(p)}_{t}=1/P$.

\For {$t=1$ to $T$}
\For {$p=1$ to $P$}
\For {$m=1$ to $M$}
\State $u_{t}^{m,p} \sim f_t(\cdot | \theta^{(p)}_{t-1})$
\EndFor
\State Find the estimate $\widehat{\frac{Z_{t-1}(\theta_{t-1}^{(p)})}{Z_{t}(\theta_{t-1}^{(p)})}}$ using \eqref{eq:SMC_Z_tempering-1}
\State Calculate $\widetilde{w}_{t}^{(p)} =  \frac{\gamma_{t}(y|\theta_{t-1}^{(p)})}{\gamma_{t-1}(y|\theta_{t-1}^{(p)})} \widehat{\frac{Z_{t-1}(\theta_{t-1}^{(p)})}{Z_{t}(\theta_{t-1}^{(p)})}}$
\EndFor

\State Resample the set of particles and set $w^{(p)}_{t}=1/P$

\For {$p=1$ to $P$}
\State Draw $\theta^{(p)}_{t} \sim K(\cdot|\theta^{(p)}_{t-1})$ where $K$ is an MCMC kernel
\EndFor
\EndFor
\end{algorithmic}
\end{algorithm}

\end{document}